\newtheorem{theorem}{Theorem}
\newtheorem{lemma}[theorem]{Lemma}
\newenvironment{proof}[1][Proof]{\textbf{#1.} }{\ \rule{0.5em}{0.5em}}
\begin{document}

\title{Quantum mechanics with coordinate dependent noncommutativity}
\author{V.G. Kupriyanov\thanks{%
e-mail: vladislav.kupriyanov@gmail.com} \\
CMCC, Universidade Federal do ABC, Santo Andr\'{e}, SP, Brazil}
\date{\today                                        }
\maketitle

\begin{abstract}
Noncommutative quantum mechanics can be considered as a first step in the
construction of quantum field theory on noncommutative spaces of generic form, when the commutator between coordinates is a
function of these coordinates. In this paper we discuss the
mathematical framework of such a theory. The noncommutativity is treated as an external antisymmetric field satisfying the Jacoby identity. First, we propose a symplectic realization of a given Poisson manifold and construct the Darboux coordinates on the obtained symplectic manifold. Then we define the star product on a Poisson manifold and obtain the expression for the trace functional. The above ingredients are used to formulate a nonrelativistic quantum mechanics on noncommutative spaces of general form. All considered constructions are obtained as a formal series in the parameter of noncommutativity. In particular, the complete algebra of commutation relations between coordinates and conjugated momenta is a deformation of the standard Heisenberg algebra. As examples we consider a free particle and an isotropic harmonic oscillator on the rotational invariant noncommutative space.
\end{abstract}

\newpage
\section{Introduction}

Quantum field theory on noncommutative spaces has been studied extensively
during the last decades \cite{NCreviews}. The main attention was given to
the case of flat noncommutative space-time, realized by the coordinate
operators $\hat{x}^{i},\ i=1,...,N,$ satisfying the algebra $\left[ \hat{x}%
^{i},\hat {x}^{j}\right] =i\theta^{ij}\,,$ with a constant $\theta^{ij}$
being the parameters of noncommutativity. Different phenomenological
consequences of the presence of this type of noncommutativity in the theory
were studied. The comparison of the theoretical predictions with the
experimental data gives rise to the bounds of noncommutativity in this
model, see e.g. \cite{AGSV} and references therein. However, the restriction
to flat noncommutative spaces does not seem to be very natural. The physical
motivation for noncommutativity comes from a combination of the general
quantum mechanical arguments with Einstein relativity \cite{Doplicher}, and
space-time in any dynamical theory of gravity cannot be flat.

The presence of a more general type of noncommutativity, i.e., when the
parameters of noncommutativity depend on coordinates may lead to absolutely
different phenomenological consequences. For example, in \cite{BW} it was
shown that the Lagrangian of the Grosse-Wulkenhaar (renormalizable) model
\cite{GW} can be written as a Lagrangian of a scalar field propagating in a
curved noncommutative space, defined by the truncated Heisenberg algebra.
The problem is to construct a consistent quantum field theory on
noncommutative spaces of general form.

Our point of view is the following: one may construct noncommutative field theory
of generic form considering relativistic version of quantum mechanics (QM) with
coordinate operators satisfying the commutation relations
\begin{equation}
\left[ \hat{x}^{i},\hat{x}^{j}\right] =i\theta\hat{\omega}_{q}^{ij}\left(
\hat{x}\right) , \,\,\,i,j=1,...,N, \label{1}
\end{equation}
where $\hat{\omega}_{q}^{ij}\left( \hat{x}\right) $ is an operator defined
from physical considerations with a specified ordering which describes the
noncommutativity of the space. In this paper we discuss the mathematical framework of nonrelativistic
QM with coordinate operators satisfying commutation
relations of the type (\ref{1}). A two-dimensional model of position-dependent noncommutativity in QM was
proposed in \cite{GK}. The particular example was considered in \cite{Fring1}, where also was observed that canonical operators in these models are in general non Hermitian with respect to standard inner products \cite{Fling2}. For an example of QM and field theory on kappa-Minkowski space see e.g. \cite{Meljanac} and references therein. Quantum mechanical models on fuzzy spaces were discussed in \cite{Presnajder}.

To formulate a consistent QM on noncommutative spaces (\ref{1}) first we need to introduce momenta $\hat{p}_{i},$ conjugated to $\hat {x}^{i}$, i.e., to obtain the complete algebra of commutation relations, obeying the Jacobi identities, as a deformation in $\theta$ of the Heisenberg algebra%
\begin{align}
& \left[ \hat{x}^{i},\hat{x}^{j}\right] =i\theta\hat{\omega}_{q}^{ij}\left( \hat{x}\right)~,
\label{3} \\
&\left[ \hat{x}^{i},\hat{p}_{j}\right] =i\hat{\delta}_{q}^{ij}\left( \hat{x},\hat{p}\right)=\delta^{ij}+\theta\delta_{1}^{ij}\left( \hat{x},\hat{%
p}\right) +O\left( \theta^{2}\right) ~,  \notag \\
&\left[ \hat{p}_{i},\hat{p}_{j}\right] =i\hat{\varpi}_{q}^{ij}\left( \hat{x},\hat{p}\right)=\theta\varpi_{1}^{ij}\left( \hat{x},\hat{p}\right)
+O\left( \theta^{2}\right)~,  \notag
\end{align}
where $\delta^{ij}$ is the Kronecker delta. However, $\delta_{1}^{ij}\left(
\hat{x},\hat{p}\right) $ and $\varpi_{1}^{ij}\left( \hat{x},\hat{p}\right) $
are already an operator-valued function of $\hat{x}$ and $\hat{p}$. Then one should construct a representation of this algebra.
And finally, it is necessary to define a Hilbert space and to introduce an internal product on it $\langle\varphi|\psi\rangle$, so that the canonical operators $\hat{x}^{i}$ and $\hat{p}_{i}$ be self-adjoint, i.e., satisfy $\langle\hat{p}_{i}\varphi|\psi\rangle=\langle\varphi|\hat{p}_{i}\psi\rangle$ and $\langle\hat{x}^{i}\varphi|\psi\rangle=\langle\varphi|\hat{x}^{i}\psi\rangle$ for any two states $|\varphi\rangle$ and $|\psi\rangle$ from the Hilbert space.

In the Sec. 2 we show that the algebra (\ref{3}) can be obtained from a quantization of $2N$ dimensional symplectic manifold with coordinates $\xi
^{\mu }=\left( x^{i},p_{i}\right) ,\ \ \mu =1,...,2N,$ and a symplectic
structure
\begin{equation}
\Omega _{\mu \nu }\left( \xi \right) =\Omega _{\mu \nu }^{0}+O\left( \theta
\right) ,  \label{4a}
\end{equation}
such that $\Omega _{ij}=\theta \omega ^{ij}\left( x\right) $, where $\omega ^{ij}\left(
x\right) $ is a Poisson bi-vector, corresponding to the operator $\hat{\omega}_{q}^{ij}\left( \hat{x}\right)$ and $\Omega _{\mu \nu }^{0}$ is a canonical symplectic matrix. That is, to construct the complete algebra of commutation relations (\ref{3}) one should start with a finding of a symplectic realization of the corresponding Poisson manifold. In the Sec. 3 we give a recursive solution to this problem in a form
of power series in $\theta $.

Then, in the Sec. 4 we construct the Darboux coordinates $\eta ^{\mu
}=\left( y^{i},\pi _{i}\right) $ on the obtained symplectic manifold in a form of perturbative
series $\eta ^{\mu }\left( \xi \right) =\xi ^{\mu }+\theta \eta _{1}^{\mu
}\left( \xi \right) +O\left( \theta ^{2}\right) ,$ providing an explicit
formulas $\eta _{n}^{\mu }\left( \xi \right) $ of each order in $\theta $
and giving a complete description of the arbitrariness in our construction.
The particular case $y^{i}=x^{i}+\theta /2\omega ^{ij}\left( x\right)
p_{j}+O\left( \theta ^{2}\right) $ and $\pi _{i}=p_{i}$ is considered in the
Sec. 5, where we also present a direct method of finding of
Darboux coordinates starting from a Poisson bi-vector $\omega ^{ij}\left(
x\right) $.

In fact, it is shown that the problem of the construction of the symplectic
structure $\Omega $ in each order in $\theta $ is reduced to the solution of
algebraic equations. We give the explicit formulae for the
solution of these equations in each order in\ the deformation parameter. In
the Sec. 6 we discuss a general form of Darboux coordinates obtained from
the direct method:
\begin{equation*}
x^{i}=y^{i}-\frac{\theta }{2}\omega ^{il}\left( y\right) \pi _{l}+O\left(
\theta ^{2}\right) ,\,\,\,\ p_{i}=\pi _{i}-\theta j_{i}(y,\pi ,\theta ),
\end{equation*}%
where $j_{i}(y,\pi ,\theta )$ is an arbitrary vector.

Finally, we consider the canonical quantization of the obtained symplectic manifold. In Sec. 7 we use the expressions
for Darboux coordinates $\eta ^{\mu }\left( \xi \right) $ to
construct the polydifferential representation of the algebra (\ref{3}) and to define the star product on the corresponding Poisson manifold. The explicit form of the star product is given up to the third order in the deformation parameter $\theta$. To complete our construction we also need the expression for the trace functional on the algebra of star product. In Sec. 8 we describe the prerturbative procedure of the construction of the trace.

Using the above ingredients in Sec. 9 we formulate the quantum mechanics with noncommutative coordinates satisfying the algebra (\ref{1}), describing the Hilbert space, the internal product on it and the action of the canonical operators on the states from the Hilbert space.
As an example we solve the eigenvalue problem for the free particle on coordinate dependent NC space. Also we discuss the three-dimensional isotropic harmonic oscillator on rotational invariant NC space. This example shows that the noncommutativity can be introduced in a minimal way in the theory, i.e., one may obtain nonlocality without violating physical observables like the energy
spectrum, etc.

\section{Necessary and sufficient conditions}

Consider the noncommutative space defined by the commutation relations (\ref%
{1}). We choose the symmetric Weyl ordering for the operator $\hat {\omega}%
_{q}^{ij}\left( \hat{x}\right) $. Let $\omega_{q}^{ij}\left( x\right) $ be a
symbol of the operator $\hat{\omega}_{q}^{ij}\left( \hat {x}\right) $, i.e.,%
\begin{equation*}
\hat{\omega}_{q}^{ij}\left( \hat{x}\right) =\int\frac{d^{N}p}{\left(
2\pi\right) ^{N}}\tilde{\omega}_{q}^{ij}\left( p\right) e^{-ip_{m}\hat {x}%
^{m}},
\end{equation*}
where $\tilde{\omega}_{q}^{ij}(p)$ is a Fourier transform of ${\omega}_{q}^{ij}$.
The consistency condition for the algebra (\ref{1}) is a consequence of the
Jacobi identity (JI) and reads:%
\begin{equation}
\left[ \hat{x}^{i},\hat{\omega}_{q}^{jk}\right] +\left[ \hat{x}^{k},\hat{%
\omega}_{q}^{ij}\right] +\left[ \hat{x}^{j},\hat{\omega}_{q}^{ki}\right] =0.
\label{2}
\end{equation}
It implies, see e.g. \cite{KV}, that
\begin{equation}
\omega_{q}^{ij}\left( x\right) =\omega^{ij}\left( x\right) +\omega
_{co}^{ij}\left( x\right) ,  \label{omega}
\end{equation}
where $\omega^{ij}\left( x\right) $ should be a Poisson bi-vector, i.e.,
satisfy the equation%
\begin{equation}
\omega^{il}\partial_{l}\omega^{jk}+\omega^{kl}\partial_{l}\omega^{ij}+%
\omega^{jl}\partial_{l}\omega^{ki}=0,  \label{2a}
\end{equation}
and the term $\omega_{co}^{ij}\left( x\right) $ stands for non-Poisson
corrections to $\omega^{ij}\left( x\right) $ of higher order in $\theta$,
expressed in terms of $\omega^{ij}\left( x\right) $ and its derivatives,
which depend on specific ordering of the operator $\hat{\omega}%
_{q}^{ij}\left( \hat{x}\right) $. So, telling that the operator $\hat{\omega
}_{q}^{ij}\left( \hat{x}\right) $ is defined from physical considerations we
mean that the Poisson bi-vector $\omega^{ij}\left( x\right) $ is given and
the ordering is specified. In what follows we treat $\omega^{ij}\left(
x\right) $ as an external antisymmetric field obeying the eq. (\ref{2a}).

It is convenient to introduce the following notations:%
\begin{align*}
& \hat{\xi}^{\mu}=\left( \hat{x}^{i},\hat{p}_{i}\right) ,\ \ \mu =1,...,2N,
\\
& \hat{\Omega}_{\mu\nu}^{q}=\left(
\begin{array}{cc}
\theta\hat{\omega}_{q}^{ij} & \hat{\delta}_{q}^{ij} \\
-\hat{\delta}_{q}^{ji} & \hat{\varpi}_{q}^{ij}%
\end{array}
\right) .
\end{align*}
In these notations the algebra of commutation relations (\ref{3}) is written
as%
\begin{equation}
\left[ \hat{\xi}^{\mu},\hat{\xi}^{\nu}\right] =i\hat{\Omega}_{\mu\nu}^{q}~.
\label{3a}
\end{equation}
The Jacobi identity for (\ref{3a}) implies that%
\begin{equation}
\left[ \hat{\xi}^{\mu},\hat{\Omega}_{\nu\alpha}^{q}\right] +\left[ \hat {\xi}%
^{\alpha},\hat{\Omega}_{\mu\nu}^{q}\right] +\left[ \hat{\xi}^{\nu},\hat{%
\Omega}_{\alpha\mu}^{q}\right] =0~.  \label{cc}
\end{equation}
This equation we call the consistency condition for the algebra (\ref{3}).
Let $\Omega_{\mu\nu}^{q}\left( \xi\right) $ be a symbol of the operator $%
\hat{\Omega}_{\mu\nu}^{q}$. The eq. (\ref{cc}) means that
\begin{equation}
\Omega_{\mu\nu}^{q}\left( \xi\right) =\Omega_{\mu\nu}\left( \xi\right)
+\Omega_{\mu\nu}^{co}\left( \xi\right) ,  \label{Omega}
\end{equation}
where $\Omega_{\mu\nu}\left( \xi\right) $ should be a Poisson bi-vector,
i.e.,%
\begin{equation}
\Omega_{\mu\sigma}\partial_{\sigma}\Omega_{\nu\alpha}+\Omega_{\alpha\sigma
}\partial_{\sigma}\Omega_{\mu\nu}+\Omega_{\nu\sigma}\partial_{\sigma}%
\Omega_{\alpha\mu}=0,  \label{6}
\end{equation}
with $\partial_{\sigma}=\partial/\partial\xi^{\sigma}$, and $\Omega_{\mu\nu
}^{co}\left( \xi\right) $ is a non-Poisson correction due to ordering.
Comparing (\ref{Omega}) with (\ref{omega}) we conclude that
\begin{equation}
\Omega_{\mu\nu}\left( \xi\right) =\left(
\begin{array}{cc}
\theta\omega^{ij}\left( x\right) & \delta^{ij}\left( x,p\right) \\
-\delta^{ji}\left( x,p\right) & \varpi^{ij}\left( x,p\right)%
\end{array}
\right) ,  \label{7}
\end{equation}
where $\omega^{ij}\left( x\right) $ is a given Poisson bi-vector and the
functions $\delta^{ij}\left( x,p\right) $ and $\varpi^{ij}\left( x,p\right) $
to be determined from the equation (\ref{6}). Writing this equation in
components, e.g., $\mu=i,\ \nu=j,\ \alpha=N+k,$ one obtains partial
differential equations for the functions $\delta^{ij}\left( x,p\right) $ and
$\varpi^{ij}\left( x,p\right) $ in terms of given $\omega^{ij}\left(
x\right) $.

Thus, in order to obtain the expression for the operator $\hat{\Omega}_{\mu
\nu}^{q}$, satisfying the consistency condition (\ref{cc}), first we need to
find the Poisson bi-vector $\Omega_{\mu\nu}\left( \xi\right) ,$ such that $%
\Omega_{ij}\left( \xi\right) =\theta\omega^{ij}\left( x\right) ,$ which is a
necessary condition, and then determine the corrections due to ordering $%
\Omega_{\mu\nu}^{co}\left( \xi\right) $, which is a sufficient condition.

\section{Symplectic realization of a Poisson manifold}

This section is devoted to the solution of the necessary condition, i.e., to
the construction of the Poisson bi-vector $\Omega_{\mu\nu}\left( \xi\right) $%
. From a mathematical point of view the problem of finding a symplectic
realization of a Poisson manifold is the problem of the construction of a
local symplectic groupoid. The existence of a local symplectic groupoid for any
Poisson structure was shown in \cite{Karasev} and independently in \cite%
{Weinstein}. In the present paper we are interested in
explicit formulae for the symplectic structure $\Omega $ in a form (\ref{4a}%
), which is motivated by the condition that the complete algebra of commutation relations (\ref{3}) should be a deformation of the Heisenberg algebra.

Note that the condition (\ref{4a}) implies that the matrix $\Omega_{\mu\nu}$
can be represented as a perturbative series
\begin{equation}
\Omega_{\mu\nu}\left( \xi\right) =\sum_{n=0}^{\infty}\theta^{n}\Omega
_{\mu\nu}^{n}\left( \xi\right) ~,\ \ \ \Omega^{0}=\left(
\begin{array}{cc}
0 & 1_{N\times N} \\
-1_{N\times N} & 0%
\end{array}
\right) .  \label{9}
\end{equation}
In particular,
\begin{align}
\delta^{ij}\left( x,p\right) & =\delta^{ij}+\theta\delta_{1}^{ij}\left(
x,p\right) +O\left( \theta^{2}\right) ~,  \label{8} \\
\varpi^{ij}\left( x,p\right) & =\theta\varpi_{1}^{ij}\left( x,p\right)
+O\left( \theta^{2}\right) ~.  \notag
\end{align}
So, at least perturbativelly, $\Omega_{\mu\nu}$ is non-degenerate, $\det
\Omega\neq0.$ Let us denote its inverse matrix by $\bar{\Omega}_{\mu\nu}$.

From the identity $\bar{\Omega}_{\mu\sigma}\Omega_{\sigma\nu}=\delta_{\mu%
\nu} $ one obtains%
\begin{equation*}
\partial_{\sigma}\Omega_{\mu\nu}=-\Omega_{\mu\beta}\partial_{\sigma}\bar{%
\Omega}_{\beta\gamma}\Omega_{\gamma\nu}.
\end{equation*}
Taking this into account, the Jacobi identity (\ref{6}) can be rewritten as%
\begin{equation}
\Omega_{\alpha\sigma}\Omega_{\mu\beta}\Omega_{\nu\gamma}\left( \partial
_{\sigma}\bar{\Omega}_{\beta\gamma}+\partial_{\gamma}\bar{\Omega}_{\sigma
\beta}+\partial_{\beta}\bar{\Omega}_{\gamma\sigma}\right) =0.  \label{18}
\end{equation}
Which means that non-degenerate matrix $\Omega_{\mu\nu}$ obeys the JI (\ref%
{6}) if and only if its inverse obeys the equation%
\begin{equation}
\partial_{\sigma}\bar{\Omega}_{\beta\gamma}+\partial_{\gamma}\bar{\Omega }%
_{\sigma\beta}+\partial_{\beta}\bar{\Omega}_{\gamma\sigma}=0.  \label{19}
\end{equation}
If the matrix $\bar{\Omega}_{\mu\nu}$ has a form%
\begin{equation}
\bar{\Omega}_{\mu\nu}=\partial_{\mu}J_{\nu}-\partial_{\nu}J_{\mu},
\label{19a}
\end{equation}
where vector $J_{\mu}\left( \xi\right) $ is called symplectic potential,
then (\ref{19}) is automatically satisfied. We conclude that if the matrix $%
\Omega_{\mu\nu}$ obey the equation%
\begin{equation}
\partial_{\mu}J_{\nu}-\partial_{\nu}J_{\mu}=\Omega_{\mu\nu}^{-1},  \label{20}
\end{equation}
then it obeys the JI (\ref{6}).

Our idea is to study the equation (\ref{20}), where $\Omega_{ij}=\theta
\omega^{ij}\left( x\right) $ is given and satisfy the JI (\ref{2a}), while $%
\delta^{ij}$, $\varpi^{ij}$ and $J_{\mu}$ are unknown, instead of looking
for solution of eq. (\ref{6}). The eq. (\ref{6}) is an integrability
condition for (\ref{20}).

\subsection{Perturbative solution}

We are interested in a perturbative solution of (\ref{20}). To this end let
us first represent (\ref{20}) as a perturbative series and obtain equation
in each order in $\theta$. Since $\Omega_{\mu\nu}$ has a perturbative form (%
\ref{9}), its inverse matrix also can be written as
\begin{equation}
\bar{\Omega}_{\mu\nu}\left( \xi\right) =\sum_{n=0}^{\infty}\theta^{n}\bar{%
\Omega}_{\mu\nu}^{n}\left( \xi\right) ~,  \label{21}
\end{equation}
where $\bar{\Omega}^{0}=-\Omega^{0}$ and%
\begin{equation}
\bar{\Omega}_{\mu\nu}^{n}=\left(
\begin{array}{cc}
\bar{\omega}_{n}^{ij}\left( x,p\right) & \bar{\delta}_{n}^{ij}\left(
x,p\right) \\
-\bar{\delta}_{n}^{ji}\left( x,p\right) & \bar{\varpi}_{n}^{ij}\left(
x,p\right)%
\end{array}
\right) .  \label{21a}
\end{equation}
By the definition%
\begin{align}
& \bar{\omega}_{n}^{ij}=\partial_{i}J_{j}^{n}-\partial_{j}J_{i}^{n},
\label{21b} \\
& \bar{\delta}_{n}^{ij}=\partial_{i}J_{N+j}^{n}-\partial_{j}^{p}J_{i}^{n},
\notag \\
& \bar{\varpi}_{n}^{ij}=\partial_{i}^{p}J_{N+j}^{n}-%
\partial_{j}^{p}J_{N+i}^{n},  \notag
\end{align}
where $\partial_{i}=\partial/\partial x^{i}$ and $\partial_{i}^{p}=\partial/%
\partial p^{i}$.

Substituting (\ref{9}) and (\ref{21}) in the identity $\bar{\Omega}_{\mu
\sigma}\Omega_{\sigma\nu}=\delta_{\mu\nu}$ one has%
\begin{equation}
\left( \sum_{n=0}^{\infty}\theta^{n}\bar{\Omega}^{n}\right) \left(
\sum_{m=0}^{\infty}\theta^{m}\Omega^{m}\right) =\sum_{n=0}^{\infty}\theta
^{n}\sum_{m=0}^{n}\bar{\Omega}^{n-m}\Omega^{m}=1.  \label{22}
\end{equation}
Equating the powers in $\theta$ in the right and in the left hand sides of (%
\ref{22}) we obtain%
\begin{equation}
-\bar{\Omega}^{n}\Omega^{0}=\Omega^{0}\Omega^{n}+\sum_{m=1}^{n-1}\bar{\Omega }%
^{n-m}\Omega^{m},\ \ \ n\geq1,
\end{equation}
which can be rewritten as
\begin{equation}
\Omega^{0}\bar{\Omega}^{n}\Omega^{0}=\Omega^{n}+\sum_{m=1}^{n-1}\Omega^{0}\bar{\Omega }%
^{n-m}\Omega^{0}\Omega^{0}\Omega^{m},\ \ \ n\geq1.  \label{23}
\end{equation}
Then, iterating (\ref{23}) we come to the following equation%
\begin{align}
&-\Omega^{0} \bar{\Omega}^{n}\Omega^{0}+\Omega^{n}+\sum_{m_{1}=1}^{n-1}%
\Omega^{n-m_{1}}\Omega^{0}\Omega^{m_{1}}+...+  \label{24} \\
&
\sum_{m_{1}=1}^{n-r}...\sum_{m_{r}=1}^{n-m_{1}-...-m_{r-1}-1}%
\Omega^{n-m_{1}-...-m_{r}}\Omega^{0}\Omega^{m_{r}}\Omega^{0}...\Omega
^{0}\Omega^{m_{1}}+...+\Omega^{1}\left( \Omega^{0}\Omega^{1}\right) ^{n-1}=
\notag \\
& -\Omega^{0}\bar{\Omega}^{n}\Omega^{0}+\Omega^{n}+\sum_{r=1}^{n-1}\left\{
\sum_{m_{1}=1}^{n-r}...\sum_{m_{r}=1}^{n-m_{1}-...-m_{r-1}-1}\Omega
^{n-m_{1}-...-m_{r}}\Omega^{0}\Omega^{m_{r}}\Omega^{0}...\Omega^{0}%
\Omega^{m_{1}}\right\}=0 .  \notag
\end{align}
In particular, for $n=1$ (\ref{24}) means%
\begin{equation}
-\Omega^{0}\bar{\Omega}^{1}\Omega^{0}+\Omega^{1}=0,  \label{25}
\end{equation}
for $n=2$:%
\begin{equation}
-\Omega^{0}\bar{\Omega}^{2}\Omega^{0}+\Omega^{2}+\Omega^{1}\Omega^{0}\Omega
^{1}=0,  \label{26}
\end{equation}
etc.

Eq. (\ref{25}) in components reads%
\begin{align}
& \partial_{i}^{p}J_{N+j}^{1}-\partial_{j}^{p}J_{N+i}^{1}=-\omega^{ij}\left(
x\right) ,  \label{27} \\
& \partial_{i}J_{N+j}^{1}-\partial_{j}^{p}J_{i}^{1}=\delta_{1}^{ji}\left(
x,p\right) ,  \label{28} \\
& \partial_{i}J_{j}^{1}-\partial_{j}J_{i}^{1}=-\varpi_{1}^{ij}\left(
x,p\right) ,  \label{29}
\end{align}
where $\omega^{ij}\left( x\right) $ is given and $\delta_{1}^{ji}$ and $%
\varpi_{1}^{ij}$ should be found. From the eq. (\ref{27}) one defines, $%
J_{N+i}^{1}=\frac{1}{2}\omega^{ij}\left( x\right)
p_{j}+\partial_{i}^{p}f^{1}\left( x,p\right) $ in terms of $%
\omega^{ij}\left( x\right) $ and arbitrary function $f^{1}\left( x,p\right) $%
. $J_{i}^{1}$ remains arbitrary and eqs. (\ref{28}) and (\ref{29}) define $%
\delta_{1}^{ji}$ and $\varpi _{1}^{ij}$ in terms of $\omega^{ij},\ f^{1}$
and $J_{i}^{1}$.

Eq. (\ref{26}) in components reads%
\begin{align}
&
\partial_{i}^{p}J_{N+j}^{2}-\partial_{j}^{p}J_{N+i}^{2}-\omega^{il}%
\delta_{1}^{jl}-\delta_{1}^{il}\omega^{lj}=0,  \label{30} \\
&
\delta_{2}^{ij}=\partial_{i}^{p}J_{j}^{2}-\partial_{j}J_{N+i}^{2}+%
\delta_{1}^{il}\delta_{1}^{lj}-\omega^{il}\varpi_{1}^{lj},  \label{31} \\
&
\varpi_{2}^{ij}=\partial_{j}J_{i}^{2}-\partial_{i}J_{j}^{2}+\delta_{1}^{li}%
\varpi_{1}^{lj}+\varpi_{1}^{il}\delta_{1}^{lj}.  \label{32}
\end{align}
Again, eq. (\ref{30}) is a differential equation on $J_{N+i}^{2}$ which
defines it up to the gradient of an arbitrary function $f^{2}\left(
x,p\right) $, while eqs. (\ref{31}) and (\ref{32}) are the definition of the
functions $\delta_{2}^{ij}$ and $\varpi_{2}^{ij}$ in terms of $J_{N+i}^{2}$,
arbitrary $J_{i}^{2}$ and first order functions.

The integrability condition for the eq. (\ref{30}) is%
\begin{equation}
\partial_{k}^{p}\left( \omega^{il}\delta_{1}^{jl}+\delta_{1}^{il}\omega
^{lj}\right) +\mbox{cycl.}(ijk)=0.  \label{33}
\end{equation}
We rewrite it as%
\begin{equation}
\omega^{jl}\left( \partial_{i}^{p}\delta_{1}^{kl}-\partial_{k}^{p}\delta
_{1}^{il}\right) +\mbox{cycl.}(ijk)=0.  \label{34}
\end{equation}
Then, using (\ref{27}) and (\ref{28}) one finds%
\begin{equation}
\partial_{i}^{p}\delta_{1}^{kl}-\partial_{k}^{p}\delta_{1}^{il}=\partial
_{l}\omega^{ki}.  \label{35}
\end{equation}
That is, (\ref{33}) is exactly the JI (\ref{2a}) for the matrix $\omega
^{ij}\left( x\right) .$

Now let us write the eq. (\ref{24}) in components in the $n$-th order in $%
\theta$, for $n>1$:%
\begin{align}
& \partial_{i}^{p}J_{N+j}^{n}-\partial_{j}^{p}J_{N+i}^{n}=  \label{36} \\
& -\sum_{r=1}^{n-1}\left\{
\sum_{m_{1}=1}^{n-r}...\sum_{m_{r}=1}^{n-m_{1}-...-m_{r-1}-1}\left[
\Omega^{n-m_{1}-...-m_{r}}\Omega^{0}\Omega^{m_{r}}\Omega^{0}...\Omega^{0}%
\Omega^{m_{1}}\right] _{ij}\right\} ,  \notag \\
& \delta_{n}^{ij}=-\bar{\delta}_{n}^{ji}  \label{37} \\
& -\sum_{r=1}^{n-1}\left\{
\sum_{m_{1}=1}^{n-r}...\sum_{m_{r}=1}^{n-m_{1}-...-m_{r-1}-1}\left[
\Omega^{n-m_{1}-...-m_{r}}\Omega^{0}\Omega^{m_{r}}\Omega^{0}...\Omega^{0}%
\Omega^{m_{1}}\right] _{iN+j}\right\} ,  \notag \\
& \varpi_{n}^{ij}=-\overline{\varpi}_{n}^{ij}  \label{38} \\
& -\sum_{r=1}^{n-1}\left\{
\sum_{m_{1}=1}^{n-r}...\sum_{m_{r}=1}^{n-m_{1}-...-m_{r-1}-1}\left[
\Omega^{n-m_{1}-...-m_{r}}\Omega^{0}\Omega^{m_{r}}\Omega^{0}...\Omega^{0}%
\Omega^{m_{1}}\right] _{N+iN+j}\right\} ,  \notag
\end{align}
where functions $\bar{\delta}_{n}^{ji}$ and $\bar{\varpi}_{n}^{ij}$ are
determined in (\ref{21b}). The logic is the same as in the first two orders,
from the eq. (\ref{36}) one finds $J_{N+i}^{n}$ up to the gradient of an
arbitrary function $f^{n}\left( x,p\right) $, and then uses the eqs. (\ref%
{37}) and (\ref{38}) to define functions $\delta_{n}^{ij}$ and $%
\varpi_{n}^{ij}$ in terms of $J_{N+i}^{n}$, arbitrary $J_{i}^{n}$ and
functions $\delta_{m}^{ij}$ and $\varpi_{m}^{ij}$ of lower orders, $%
m<n$.

\subsection{Integrability condition in the $n$-th order}

Let us study the integrability condition for the eq. (\ref{36}). Since $%
\partial_{i}^{p}=\partial_{N+i}=\Omega_{i\sigma}^{0}\partial_{\sigma}$, it
can be written as%
\begin{align}
C= & \Omega_{k\sigma}^{0}\partial_{\sigma}\sum_{r=1}^{n-1}\left\{
\sum_{m_{1}=1}^{n-r}...\sum_{m_{r}=1}^{n-m_{1}-...-m_{r-1}-1}\left[
\Omega^{n-m_{1}-...-m_{r}}\Omega^{0}\Omega^{m_{r}}\Omega^{0}...\Omega
^{0}\Omega^{m_{1}}\right] _{ij}\right\}  \label{39} \\
& +\mbox{cycl.}(ijk)=0.  \notag
\end{align}
In the first order the integrability condition was satisfied automatically,
since right hand side of the eq. (\ref{27}) does not depend on $p$. In the
second order the integrability condition was satisfied as a consequence of
the JI (\ref{2a}) for the matrix $\omega^{ij}\left( x\right) $. To prove the
existence of the symplectic potential $J_{\mu}\left( \xi\right) $ we should
prove that the integrability condition (\ref{39}) is satisfied in all orders.

We will do it by the induction. Suppose that the eq. (\ref{36}) is solvable
up to the $\left( n-1\right) $-th order. It means that both the symplectic
potential $J_{\mu}$ and symplectic structure $\Omega_{\mu\nu}$ can be
defined from the eq. (\ref{24}) up to the $\left( n-1\right) $-th order.
That is, JI (\ref{6}) is satisfied up to the $\left( n-1\right) $-th order,%
\begin{align}
& \Omega_{\mu\sigma}^{0}\partial_{\sigma}\Omega_{\nu\alpha}^{r}+\sum
_{s=1}^{r-1}\Omega_{\mu\sigma}^{r-s}\partial_{\sigma}\Omega_{\nu\alpha}^{s}+%
\mbox{cycl.}(\mu\nu\alpha)=0,\ \ r=2,...,n-1,  \label{40} \\
& \Omega_{\mu\sigma}^{0}\partial_{\sigma}\Omega_{\nu\alpha}^{1}+\mbox{cycl.}%
(\mu\nu\alpha)=0.  \notag
\end{align}

Let us show that (\ref{39}) is satisfied. To begin with we prove the
following Lemma.

\begin{lemma}
\label{L1} If $J_{\mu}$ exists up to the $\left( n-3\right) $-th order, then
\begin{align}
& D=  \label{a2} \\
& \sum_{r=3}^{n-1}\left\{
\sum_{m_{1}=1}^{n-r}...\sum_{m_{r}=1}^{n-...-m_{r-1}-1}\left[
\Omega^{m_{r-1}}\Omega^{0}\right] _{j\alpha}\Omega_{i\sigma}^{m_{r}}%
\partial_{\sigma}\left[ \Omega^{n-...-m_{r}}\Omega^{0}\Omega^{m_{r-2}}...%
\Omega^{0}\Omega^{m_{2}}\right] _{\alpha\beta }\left[ \Omega^{0}%
\Omega^{m_{1}}\right] _{\beta k}\right\}  \notag \\
& +\mbox{cycl.}(ijk)=  \notag\\
&\sum_{m_{1}=1}^{n-3}\sum_{m_{2}=1}^{n-m_{1}-2}%
\sum_{m_{3}=1}^{n-m_{1}-m_{2}}\left[ \Omega^{m_{2}}\Omega^{0}\right]
_{j\alpha}\Omega_{i\sigma}^{m_3}\partial_{\sigma}\Omega_{\alpha%
\beta}^{n-...-m_{3}}\left[ \Omega^{0}\Omega^{m_{1}}\right] _{\beta k}\notag\\
& +\sum_{m_{1}=1}^{n-4}...\sum_{m_{4}=1}^{n-...-m_{3}-1}\left[ \Omega
^{m_{3}}\Omega^{0}\right] _{j\alpha}\Omega_{i\sigma}^{m_{4}}\partial_{\sigma
}\left[ \Omega^{n-...-m_{4}}\Omega^{0}\Omega^{m_{2}}\right] _{\alpha\beta }%
\left[ \Omega^{0}\Omega^{m_{1}}\right] _{\beta k}  \notag \\
& +...+\sum_{m_{1}=1}^{n-r}...\sum_{m_{r}=1}^{n-...-m_{r-1}-1}\left[
\Omega^{m_{r-1}}\Omega^{0}\right] _{j\alpha}\Omega_{i\sigma}^{m_{r}}%
\partial_{\sigma}\left[ \Omega^{n-...-m_{r}}\Omega^{0}\Omega^{m_{r-2}}...%
\Omega^{0}\Omega^{m_{2}}\right] _{\alpha\beta}\left[ \Omega^{0}\Omega^{m_{1}}%
\right] _{\beta k}  \notag \\
& +...+\left[ \Omega^{1}\Omega^{0}\right] _{j\alpha}\Omega_{i\sigma}^{1}%
\partial_{\sigma}\left[ \Omega^{1}\left( \Omega^{0}\Omega^{1}\right) ^{n-4}%
\right] _{\alpha\beta}\left[ \Omega^{0}\Omega^{1}\right] _{\beta k}+%
\mbox{cycl.}(ijk)=0.  \notag
\end{align}
\end{lemma}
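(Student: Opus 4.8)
The plan is to run an induction parallel to the one being set up for the integrability condition itself: assume that the symplectic potential $J_{\mu}$, equivalently the Jacobi identity (\ref{6}) in its componentwise form (\ref{40}), is available in every order $\le n-3$. Before anything else, observe that the multi‑line equality displayed in (\ref{a2}) is merely the term‑by‑term expansion of the sum over $r$ occurring in the definition of $D$; the only apparent mismatch — the upper limit $n-m_{1}-m_{2}$ instead of $n-m_{1}-m_{2}-1$ in the $r=3$ contribution — produces a single extra term in which $\partial_{\sigma}$ acts on the constant matrix $\Omega^{0}$ and hence vanishes. So the substance of the Lemma is the one assertion $D=0$.

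To prove it I would apply the Leibniz rule to the derivative $\partial_{\sigma}$ acting on the string $\Omega^{n-\cdots-m_{r}}\Omega^{0}\Omega^{m_{r-2}}\Omega^{0}\cdots\Omega^{0}\Omega^{m_{2}}$. Since $\Omega^{0}$ is constant, $\partial_{\sigma}$ distributes only over the non‑constant factors $\Omega^{n-\cdots-m_{r}},\Omega^{m_{r-2}},\ldots,\Omega^{m_{2}}$. The resulting collection of terms — summed over the distributions of the orders $m_{l}$ and over the cyclic permutations of $(ijk)$ — is then reorganised as follows. Whenever $\partial_{\sigma}$ lands on an \emph{interior} factor, the piece $\Omega^{m_{r}}_{i\sigma}\,\partial_{\sigma}\Omega^{(\cdot)}$, together with the two companions obtained by cycling $(ijk)$ and relabelling the summation variables, reassembles precisely into a left‑hand side of a Jacobi identity of the form (\ref{40}) at an order covered by the induction hypothesis (i.e. $\le n-3$); by that hypothesis every such triple is zero.

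What remains are the contributions in which $\partial_{\sigma}$ hits the outermost factor $\Omega^{n-m_{1}-\cdots-m_{r}}$ of the string, or in which one of the flanking $\Omega^{0}$'s is involved. A shift of the summation indices $m_{l}$ identifies these with the ``displaced'' terms already visible on the right‑hand side of (\ref{a2}): they either vanish because the derivative meets an $\Omega^{0}$, or they cancel against the corresponding boundary term of the neighbouring value of $r$, i.e. they telescope in $r$. Hence nothing survives and $D=0$. I expect the main obstacle to be exactly the combinatorial bookkeeping: one must verify that every term produced by the Leibniz expansion either has all three of its cyclic partners present with \emph{matching} nested‑summation ranges — so that a complete Jacobi triple (\ref{40}) is formed and killed by the induction hypothesis — or is balanced exactly by one of the boundary/telescoping terms; tracking the shifting upper limits of the iterated sums through these re‑indexings is what makes the argument delicate.
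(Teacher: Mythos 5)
Your reading of the first equality in (\ref{a2}) is fine (the apparent mismatch in the upper limit of the $m_{3}$ sum is indeed harmless because the extra term carries $\partial_{\sigma}\Omega^{0}=0$), but the mechanism you propose for $D=0$ does not match the structure of $D$ and, as described, fails. In $D$ the only derivative present is $\Omega_{i\sigma}^{m_{r}}\partial_{\sigma}$ acting on a factor $\Omega^{m_{l}}_{\gamma\delta}$ whose indices $\gamma,\delta$ are \emph{contracted} into the surrounding string. A Jacobi triple of the form (\ref{40}) for that factor would require the two companions $\Omega_{\gamma\sigma}^{m_{r}}\partial_{\sigma}\Omega^{m_{l}}_{\delta i}$ and $\Omega_{\delta\sigma}^{m_{r}}\partial_{\sigma}\Omega^{m_{l}}_{i\gamma}$, i.e.\ cycling over the triple $(i,\gamma,\delta)$; but the only cyclic sum available in $D$ is over the \emph{external} indices $(ijk)$, and neither that cycling nor any relabelling of the $m_{l}$'s produces terms of the required shape. (Contrast this with the $T_{r}$'s treated in Lemma \ref{L2}, where two of the three Jacobi terms genuinely appear and the identity can be used to trade them for the third plus lower-order corrections — that situation does not occur in $D$.) Likewise, the telescoping in $r$ is not a boundary effect of a Leibniz expansion: it comes from substituting the matrix-inversion recursion (\ref{24}) for the factor $\Omega_{\alpha\beta}^{n-m_{1}-m_{2}-m_{3}}$ in the $r=3$ term, which regenerates all the $r\geq 4$ terms of $D$ with opposite sign.

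The ingredient your argument never invokes — and which is the actual content of the hypothesis ``$J_{\mu}$ exists up to order $n-3$'' — is the closedness of the \emph{inverse} matrix, $\partial_{\sigma}\bar{\Omega}_{\alpha\beta}^{s}+\mathrm{cycl.}(\sigma\alpha\beta)=0$ for $s\leq n-3$, which holds because $\bar{\Omega}^{s}=\partial J^{s}-(\partial J^{s})^{T}$ as in (\ref{19a}). The paper's proof is: use (\ref{24}) to rewrite the $r=3$ term, cancel everything except $\sum\Omega_{j\alpha}^{m_{1}}\Omega_{\beta k}^{m_{2}}\Omega_{i\sigma}^{m_{3}}\partial_{\sigma}\bar{\Omega}_{\alpha\beta}^{n-m_{1}-m_{2}-m_{3}}+\mathrm{cycl.}(ijk)$, and then observe that after antisymmetrization this is proportional to the cyclic sum $\partial_{\sigma}\bar{\Omega}_{\alpha\beta}+\mathrm{cycl.}(\sigma\alpha\beta)$, which vanishes. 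Treating ``$J_{\mu}$ exists'' as merely equivalent to ``the Jacobi identity (\ref{40}) holds'' hides exactly the step you would need: the passage from $\Omega$ to $\bar{\Omega}$ is what converts the single derivative term in $D$ into a fully cyclic (and hence vanishing) combination. Without that conversion there is no identity available to kill the interior Leibniz terms, so the proposal has a genuine gap.
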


\begin{proof}
Using the eq. (\ref{24})  the first term of (\ref{a2}) can be written as
\begin{align}
&\sum_{m_{1}=1}^{n-3}\sum_{m_{2}=1}^{n-m_{1}-2}%
\sum_{m_{3}=1}^{n-m_{1}-m_{2}}\left[ \Omega^{m_{2}}\Omega^{0}\right]
_{j\alpha}\Omega_{i\sigma}^{m_3}\partial_{\sigma}\Omega_{\alpha%
\beta}^{n-...-m_{3}}\left[ \Omega^{0}\Omega^{m_{1}}\right] _{\beta k}+%
\mbox{cycl.}(ijk)=\label{a31}\\
&
\sum_{m_{1}=1}^{n-3}\sum_{m_{2}=1}^{n-m_{1}-1}%
\sum_{m_{3}=1}^{n-m_{1}-m_{2}-1}\Omega_{j\alpha}^{m_{1}}\Omega_{\beta
k}^{m_{2}}\Omega_{i\sigma}^{m_{3}}\partial_{\sigma}\bar{\Omega}%
_{\alpha\beta}^{n-m_{1}-m_{2}-m_{3}}\notag  \\
& -\sum_{m_{1}=1}^{n-4}...\sum_{m_{4}=1}^{n-...-m_{3}-1}\left[ \Omega
^{m_{3}}\Omega^{0}\right] _{j\alpha}\Omega_{i\sigma}^{m_{4}}\partial_{\sigma
}\left[ \Omega^{n-...-m_{4}}\Omega^{0}\Omega^{m_{2}}\right] _{\alpha\beta }%
\left[ \Omega^{0}\Omega^{m_{1}}\right] _{\beta k}  \notag \\
& -...-\sum_{m_{1}=1}^{n-r}...\sum_{m_{r}=1}^{n-...-m_{r-1}-1}\left[
\Omega^{m_{r-1}}\Omega^{0}\right] _{j\alpha}\Omega_{i\sigma}^{m_{r}}%
\partial_{\sigma}\left[ \Omega^{n-...-m_{r}}\Omega^{0}\Omega^{m_{r-2}}...%
\Omega^{0}\Omega^{m_{2}}\right] _{\alpha\beta}\left[ \Omega^{0}\Omega^{m_{1}}%
\right] _{\beta k}  \notag \\
& -...-\left[ \Omega^{1}\Omega^{0}\right] _{j\alpha}\Omega_{i\sigma}^{1}%
\partial_{\sigma}\left[ \Omega^{1}\left( \Omega^{0}\Omega^{1}\right) ^{n-4}%
\right] _{\alpha\beta}\left[ \Omega^{0}\Omega^{1}\right] _{\beta k}+%
\mbox{cycl.}(ijk).  \notag
\end{align}
Now, if to substitute (\ref{a31}) back to (\ref{a2}) one finds that
\begin{align*}
&D=\sum_{m_{1}=1}^{n-3}\sum_{m_{2}=1}^{n-m_{1}-1}%
\sum_{m_{3}=1}^{n-m_{1}-m_{2}-1}\Omega_{j\alpha}^{m_{1}}\Omega_{\beta
k}^{m_{2}}\Omega_{i\sigma}^{m_{3}}\partial_{\sigma}\bar{\Omega}%
_{\alpha\beta}^{n-m_{1}-m_{2}-m_{3}}+\mbox{cycl.}(ijk)=  \\
&-\sum_{m_{1}=1}^{n-3}\sum_{m_{2}=1}^{n-m_{1}-1}%
\sum_{m_{3}=1}^{n-m_{1}-m_{2}-1}\Omega_{i\sigma}^{m_{1}}\Omega_{j%
\alpha}^{m_{2}}\Omega_{k\beta }^{m_{3}}\left( \partial_{\sigma}\bar{\Omega}%
_{\alpha\beta}^{n-m_{1}-m_{2}-m_{3}}+\mbox{cycl.}(\sigma\alpha\beta)\right)
 =0,
\end{align*}
which complete the proof.
\end{proof}

The sum $C+D$ can be represented as
\begin{equation}
C+D=\sum_{r=1}^{n-1}T_{r},  \label{a4}
\end{equation}
where%
\begin{align}
& T_{1}=\sum_{m_{1}=1}^{n-1}\Omega_{k\sigma}^{0}\partial_{\sigma}\left[
\Omega^{n-m_{1}}\Omega^{0}\Omega^{m_{1}}\right] _{ij}+\mbox{cycl.}(ijk),
\label{a5} \\
&
T_{2}=\sum_{m_{1}=1}^{n-2}\sum_{m_{2}=1}^{n-m_{1}-1}\Omega_{i\sigma}^{0}%
\partial_{\sigma}\left[ \Omega^{n-m_{1}-m_{2}}\Omega^{0}\Omega^{m_{2}}%
\Omega^{0}\Omega^{m_{1}}\right] _{ij}+\mbox{cycl.}(ijk),  \label{a6} \\
& T_{r}=\sum_{m_{1}=1}^{n-r}...\sum_{m_{r}=1}^{n-m_{1}-...-m_{r-1}-1}\left\{
\Omega_{k\sigma}^{0}\partial_{\sigma}\left[ \Omega^{n-m_{1}-...-m_{r}}%
\Omega^{0}\Omega^{m_{r}}\Omega^{0}...\Omega^{0}\Omega^{m_{1}}\right]
_{ij}\right.  \label{a7} \\
& \left. +\left[ \Omega^{m_{r-1}}\Omega^{0}\right] _{j\alpha}\Omega_{i%
\sigma}^{m_{r}}\partial_{\sigma}\left[ \Omega^{n-...-m_{r}}\Omega^{0}%
\Omega^{m_{r-2}}...\Omega^{0}\Omega^{m_{2}}\right] _{\alpha\beta }\left[
\Omega^{0}\Omega^{m_{1}}\right] _{\beta k}\right\}  \notag \\
& \left. +\mbox{cycl.}(ijk),\ \ \ 3\leq r\leq n-1.\right.  \notag
\end{align}

Let us introduce the following notations:%
\begin{align}
& I\left( n,1\right)
=\sum_{m_{1}=1}^{n-1}\Omega_{k\sigma}^{m_{1}}\partial_{\sigma}%
\Omega_{ij}^{n-m_{1}}+\mbox{cycl.}(ijk),  \label{a8} \\
& I\left( n,r\right) =  \label{a9} \\
& \sum_{m_{1}=1}^{n-r}...\sum_{m_{r}=1}^{n-...-m_{r-1}-1}\left\{ \left(
\Omega_{i\sigma}^{m_{r}}\partial_{\sigma}\Omega_{j\alpha}^{n-...-m_{r}}+%
\mbox{cycl.}(i\alpha\beta)\right) \left[ \Omega^{0}\Omega^{m_{r-1}}...%
\Omega^{0}\Omega^{m_{1}}\right] _{\alpha k}+...+\right.  \notag \\
& \left[ \Omega^{m_{r-1}}\Omega^{0}...\Omega^{0}\Omega^{m_{r-s}}\Omega ^{0}%
\right] _{j\alpha}\left( \Omega_{\beta\sigma}^{m_{r-1}}\partial_{\sigma
}\Omega_{i\alpha}^{n-...-m_{r-1}}+\mbox{cycl.}(i\alpha\beta)\right) \left[
\Omega^{0}\Omega^{m_{r-s-1}}\Omega^{0}...\Omega^{0}\Omega^{m_{1}}\right]
_{\beta k}  \notag \\
& \left. +...+\left[ \Omega^{m_{r-1}}\Omega^{0}...\Omega^{0}\Omega^{m_{2}}%
\Omega^{0}\right] _{j\alpha}\left(
\Omega_{\beta\sigma}^{m_{r-1}}\partial_{\sigma}\Omega_{i%
\alpha}^{n-...-m_{r-1}}+\mbox{cycl.}(i\alpha \beta)\right) \left[
\Omega^{0}\Omega^{m_{1}}\right] _{\beta k}\right\}  \notag \\
& \left. +\mbox{cycl.}(ijk),\ \ \ 2\leq r\leq n-1,\right.  \notag \\
& I\left( n,n\right) =\left(
\Omega_{i\sigma}^{0}\partial_{\sigma}\Omega_{j\alpha}^{1}+\mbox{cycl.}%
(ij\alpha)\right) \left[ \left( \Omega ^{0}\Omega^{1}\right) ^{n-1}\right]
_{\alpha k}  \label{a10} \\
& +\sum_{s=1}^{n-3}\left[ \left( \Omega^{1}\Omega^{0}\right) ^{s}\right]
_{j\alpha}\left( \Omega_{i\sigma}^{0}\partial_{\sigma}\Omega_{\alpha\beta
}^{1}+\mbox{cycl.}(i\alpha\beta)\right) \left[ \left( \Omega^{0}\Omega
^{1}\right) ^{n-s-1}\right] _{\beta k}+\mbox{cycl.}(ijk).  \notag
\end{align}
In particular,%
\begin{equation*}
I\left( n,2\right) =\sum_{m_{1}=1}^{n-2}\sum_{m_{2}=1}^{n-m_{1}-1}\left(
\Omega_{i\sigma}^{m_{2}}\partial_{\sigma}\Omega_{j\alpha}^{n-m_{1}-m_{2}}+%
\mbox{cycl.}(ij\alpha)\right) \left[ \Omega^{0}\Omega^{m_{1}}\right]
_{\alpha k}+\mbox{cycl.}(ijk),
\end{equation*}
etc.

Note that since $\Omega_{ij}^{m}=0$ for $m>1$ and $\Omega_{ij}^{1}=\omega
^{ij}\left( x\right) $ does not depend on $p$, one has%
\begin{equation}
I\left( n,1\right)
=\sum_{m_{1}=1}^{n-1}\Omega_{k\sigma}^{m_{1}}\partial_{\sigma}%
\Omega_{ij}^{n-m_{1}}+\mbox{cycl.}(ijk)=\omega^{kl}\partial_{l}\omega^{ij}+%
\mbox{cycl.}(ijk)=0.  \label{a11}
\end{equation}
Also, due to Jacobi identity (\ref{40})%
\begin{equation}
I\left( n,n\right) =0.  \label{a12}
\end{equation}

\begin{lemma}
\label{L2} If symplectic potential $J_{\mu}$ exists up to the $n-1$ order,
then
\begin{equation}
T_{r}=I\left( n,r\right) -I\left( n,r+1\right) ,\ \ 1\leq r\leq n-1.
\label{a13}
\end{equation}
\end{lemma}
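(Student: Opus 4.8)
The plan is to prove the identity $T_{r}=I(n,r)-I(n,r+1)$ by a direct algebraic manipulation of the definitions \eqref{a5}--\eqref{a7} and \eqref{a8}--\eqref{a10}, using repeatedly the recursion \eqref{24} (equivalently the relation $\Omega^{0}\bar\Omega^{m}\Omega^{0}=\Omega^{m}+\dots$ that \eqref{24} encodes) to convert products containing $\bar\Omega$ back into sums of products of the $\Omega^{m}$'s, together with the Leibniz rule for $\partial_{\sigma}$ acting on a product of matrices. The key observation is that $T_{r}$, as written in \eqref{a7}, is a sum of a ``pure divergence'' term $\Omega^{0}_{k\sigma}\partial_{\sigma}[\,\cdots\,]_{ij}$ and a ``middle-insertion'' term; when the Leibniz rule is applied to the divergence term, the derivative distributes over the $r+1$ matrix factors $\Omega^{n-m_{1}-\dots-m_{r}},\Omega^{m_{r}},\dots,\Omega^{m_{1}}$ (the $\Omega^{0}$'s being constant), producing $r+1$ groups of terms.

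Concretely, I would first expand $\Omega^{0}_{k\sigma}\partial_{\sigma}[\Omega^{n-\dots-m_{r}}\Omega^{0}\Omega^{m_{r}}\cdots\Omega^{0}\Omega^{m_{1}}]_{ij}$ by Leibniz into a telescoping-friendly form: the term where $\partial_{\sigma}$ hits the leftmost factor combines with the middle-insertion term of $T_{r}$ and with the analogous piece coming from $T_{r+1}$'s divergence term to build the antisymmetrized combination $\Omega^{m}_{\bullet\sigma}\partial_{\sigma}\Omega^{n-\dots}_{\bullet\alpha}+\text{cycl.}$ that appears inside $I(n,r)$; the terms where $\partial_{\sigma}$ hits an interior factor $\Omega^{m_{s}}$ regroup, after relabelling the summation indices $m_{1},\dots,m_{r}$, into $-I(n,r+1)$. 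The cyclic symmetrization over $(ijk)$ is what makes the antisymmetrized ``Jacobiator-shaped'' blocks $\Omega^{m_{r}}_{i\sigma}\partial_{\sigma}\Omega^{n-\dots-m_{r}}_{j\alpha}+\text{cycl.}(ij\alpha)$ emerge; one has to be careful that the cyclic permutation in the definition of $I(n,r)$ is over the three lower indices attached to the ``innermost'' antisymmetrized triple, not over $(ijk)$, so part of the work is bookkeeping which index triple is being cyclically permuted at each stage. Lemma~\ref{L1}, whose output $D$ was shown to vanish, is exactly the statement that the leftover ``derivative-hits-a-deep-interior-factor'' terms reorganize into Jacobiators of $\bar\Omega$ of lower order and hence drop out under the inductive hypothesis; so in the present lemma those same terms can be discarded, which is why the hypothesis ``$J_\mu$ exists up to order $n-1$'' (hence the lower-order JI \eqref{40}) is invoked.

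The main obstacle I anticipate is purely combinatorial: matching the nested multi-sums $\sum_{m_1=1}^{n-r}\cdots\sum_{m_r=1}^{n-m_1-\dots-m_{r-1}-1}$ across $T_r$, $I(n,r)$ and $I(n,r+1)$ requires splitting one summation variable into two (to account for $\partial_\sigma$ splitting one $\Omega^{m}$ factor into the product $\Omega^{a}\Omega^{0}\Omega^{b}$ via \eqref{24}) and conversely merging two into one, and one must check that the upper limits of the sums transform correctly — in particular that the boundary terms (where some $m_s$ hits its extreme value, i.e.\ the factor degenerates to $\Omega^{0}$ or to $\Omega^{1}$) are precisely the terms $I(n,1)=0$ from \eqref{a11} and $I(n,n)=0$ from \eqref{a12}, so that the telescoping $\sum_{r=1}^{n-1}T_r=\sum_{r=1}^{n-1}\bigl(I(n,r)-I(n,r+1)\bigr)=I(n,1)-I(n,n)=0$ closes cleanly. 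Once \eqref{a13} is established this way, the proof of the lemma is complete; the subsequent use of it — concluding that $C+D=0$ and hence that the integrability condition \eqref{39} holds in order $n$ — is immediate from \eqref{a4} together with $I(n,1)=I(n,n)=0$ and does not need to be repeated inside this proof.
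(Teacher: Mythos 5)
Your overall strategy --- Leibniz expansion of the divergence term of $T_{r}$, use of the lower-order structure guaranteed by the inductive hypothesis, and telescoping via $I(n,1)=I(n,n)=0$ --- is the right one and matches the paper in spirit, but two of the mechanisms you describe are not the ones that make (\ref{a13}) work, and as stated they would fail.

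First, the term $-I\left( n,r+1\right)$ cannot arise from ``the derivative hitting an interior factor'' followed by a relabelling: $T_{r}$ carries $r$ nested sums while $I\left( n,r+1\right)$ carries $r+1$, and no relabelling of $m_{1},\dots ,m_{r}$ creates an extra summation index. In the paper the extra index is produced by substituting the lower-order Jacobi identity (\ref{40}) --- valid by the inductive hypothesis --- into the cyclically symmetrized combinations $\Omega _{\mu \sigma }^{0}\partial _{\sigma }\Omega _{\nu \alpha }^{m}+\mbox{cycl.}$ that appear after the Leibniz step; the right-hand side of (\ref{40}) is $-\sum_{s}\Omega _{\mu \sigma }^{m-s}\partial _{\sigma }\Omega _{\nu \alpha }^{s}-\mbox{cycl.}$, and that sum over $s$ is precisely the $(r+1)$-st sum of $I\left( n,r+1\right) $. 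Equation (\ref{24}), which you invoke for this purpose, contains no derivatives and cannot play this role; it is used only inside the proof of Lemma \ref{L1}. The remaining piece is then reduced to $I\left( n,r\right) $ using $\Omega _{\alpha \sigma }^{0}\Omega _{\alpha \beta }^{0}=\delta _{\sigma \beta }$, i.e. (\ref{delta}), a step your sketch omits.

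Second, you have the role of Lemma \ref{L1} backwards. The middle-insertion terms in $T_{r}$ for $3\leq r\leq n-1$ are exactly the contribution of $D$ from (\ref{a2}); they are not leftovers to be discarded but essential ingredients without which $T_{r}$ would not equal $I\left( n,r\right) -I\left( n,r+1\right) $. Lemma \ref{L1} is used only once, outside the present lemma, to justify that adding $D$ to $C$ costs nothing, so that $C=C+D=\sum_{r}T_{r}=0$; nothing is dropped inside the proof of (\ref{a13}), and dropping those terms would break the identity for $r\geq 3$. Relatedly, borrowing ``the analogous piece from $T_{r+1}$'s divergence term'' is not admissible: the lemma asserts the identity for each $r$ separately and the paper proves it $r$ by $r$; mixing $T_{r}$ with $T_{r+1}$ would at best establish the telescoped sum $\sum_{r}T_{r}=0$, which is weaker than (\ref{a13}).
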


\begin{proof}
First we write $T_{1}$ as%
\begin{equation}
T_{1}=\sum_{m_{1}=1}^{n-1}\left(
\Omega_{i\sigma}^{0}\partial_{\sigma}\Omega_{j\alpha}^{n-m_{1}}+\Omega_{j%
\sigma}^{0}\partial_{\sigma}\Omega_{\alpha i}^{n-m_{1}}\right) \left[
\Omega^{0}\Omega^{m_{1}}\right] _{\alpha k}+\mbox{cycl.}(ijk).  \label{a14}
\end{equation}
Then using (\ref{40}) we represent it in the form%
\begin{align*}
&
T_{1}=-\sum_{m_{1}=1}^{n-1}\Omega_{\alpha\sigma}^{0}\partial_{\sigma}%
\Omega_{ij}^{n-m_{1}}\left[ \Omega^{0}\Omega^{m_{1}}\right] _{\alpha k} \\
& -\sum_{m_{1}=1}^{n-2}\sum_{m_{2}=1}^{n-m_{1}-1}\left( \Omega_{i\sigma
}^{m_{2}}\partial_{\sigma}\Omega_{j\alpha}^{n-m_{1}-m_{2}}+\mbox{cycl.}%
(ij\alpha)\right) \left[ \Omega^{0}\Omega^{m_{1}}\right] _{\alpha k}+%
\mbox{cycl.}(ijk).
\end{align*}
The sum in the second line is exactly $I\left( n,2\right) $. Since,
\begin{equation}
\Omega_{\alpha\sigma}^{0}\Omega_{\alpha\beta}^{0}=\delta_{\sigma\beta},
\label{delta}
\end{equation}
the sum in the first line can be rewritten as%
\begin{equation*}
\sum_{m_{1}=1}^{n-1}\Omega_{k\sigma}^{m_{1}}\partial_{\sigma}\Omega
_{ij}^{n-m_{1}}+\mbox{cycl.}(ijk).
\end{equation*}
That is,
\begin{equation}
T_{1}=I\left( n,1\right) -I\left( n,2\right) .  \label{a15}
\end{equation}
For the second term we write%
\begin{align}
& T_{2}=\sum_{m_{1}=1}^{n-2}\sum_{m_{2}=1}^{n-m_{1}-1}\left\{ \left(
\Omega_{i\sigma}^{0}\partial_{\sigma}\Omega_{j\alpha}^{n-m_{1}-m_{2}}+%
\Omega_{j\sigma}^{0}\partial_{\sigma}\Omega_{\alpha
i}^{n-m_{1}-m_{2}}\right) \left[ \Omega^{0}\Omega^{m_{2}}\Omega^{0}%
\Omega^{m_{1}}\right] _{\alpha k}\right.  \label{a16} \\
& \left. +\left[ \Omega^{m_{2}}\Omega^{0}\right] _{j\alpha}\Omega
_{i\sigma}^{0}\partial_{\sigma}\Omega_{\alpha\beta}^{n-m_{1}-...-m_{k}}\left[
\Omega^{0}\Omega^{m_{1}}\right] _{\beta k}\right\} +\mbox{cycl.}(ijk)  \notag
\end{align}
Applying the eq. (\ref{40}) in (\ref{a16}) one transforms it as%
\begin{align}
& \sum_{m_{1}=1}^{n-2}\sum_{m_{2}=1}^{n-m_{1}-1}\left\{ -\Omega
_{\alpha\sigma}^{0}\partial_{\sigma}\Omega_{ij}^{n-m_{1}-m_{2}}\left[
\Omega^{0}\Omega^{m_{2}}\Omega^{0}\Omega^{m_{1}}\right] _{\alpha k}\right.
\notag \\
& \left. -\left[ \Omega^{m_{2}}\Omega^{0}\right] _{j\alpha}\left(
\Omega_{\beta\sigma}^{0}\partial_{\sigma}\Omega_{i\alpha}^{n-m_{1}-m_{2}}+%
\Omega_{\alpha\sigma}^{0}\partial_{\sigma}\Omega_{\beta
i}^{n-m_{1}-m_{2}}\right) \left[ \Omega^{0}\Omega^{m_{1}}\right] _{\beta
k}\right\}  \notag \\
&
-\sum_{m_{1}=1}^{n-3}\sum_{m_{2}=1}^{n-m_{1}-1}%
\sum_{m_{3}=1}^{n-m_{1}-m_{2}-1}\left\{ \left(
\Omega_{i\sigma}^{3}\partial_{\sigma}\Omega_{j\alpha}^{n-m_{1}-m_{2}-1}+%
\mbox{cycl.}(i\alpha\beta)\right) \left[ \Omega^{0}\Omega^{m_{2}}\Omega^{0}%
\Omega^{m_{1}}\right] _{\alpha k}\right.  \notag \\
& \left. +\left[ \Omega^{m_{2}}\Omega^{0}\right] _{j\alpha}\left(
\Omega_{\beta\sigma}^{m_{3}}\partial_{\sigma}\Omega_{i%
\alpha}^{n-m_{1}-m_{2}-1}+\mbox{cycl.}(i\alpha\beta)\right) \left[
\Omega^{0}\Omega^{m_{1}}\right] _{\beta k}\right\} +\mbox{cycl.}(ijk).
\notag
\end{align}
The last sum here is $I\left( n,3\right) .$ Using the identity (\ref{delta})
we simplify the first sum and see that%
\begin{equation}
T_{2}=I\left( n,2\right) -I\left( n,3\right) .  \label{a17}
\end{equation}
For the terms $T_{r}$ with $3\leq r\leq n-1$ the logic is the same as in the
first two cases, but we should also take into account the contribution from
the sum (\ref{a2}). First using Jacobi identity (\ref{40}) in each order we
represent it as%
\begin{align*}
& T_{r}=\sum_{m_{1}=1}^{n-r}...\sum_{m_{r}=1}^{n-m_{1}-...-m_{r-1}-1}\left\{
-\Omega_{\alpha\sigma}^{0}\partial_{\sigma}\Omega_{ij}^{n-m_{1}-...-m_{r}}
\left[ \Omega^{0}\Omega^{m_{r}}\Omega^{0}...\Omega^{0}\Omega^{m_{1}}\right]
_{\alpha k}\right. \\
& -...-\left[ \Omega^{m_{r}}\Omega^{0}...\Omega^{0}\Omega^{m_{r-s}}\Omega^{0}%
\right] _{j\alpha}\left( \Omega_{\beta\sigma}^{0}\partial_{\sigma
}\Omega_{i\alpha}^{n-m_{1}-...-m_{r}}\right. \\
& \left. +\Omega_{\alpha\sigma}^{0}\partial_{\sigma}\Omega_{\beta
i}^{n-m_{1}-...-m_{r}}\right) \left[ \Omega^{0}\Omega^{m_{r-s-1}}\Omega
^{0}...\Omega^{0}\Omega^{m_{1}}\right] _{\beta k} \\
& +\left[ \Omega^{m_{r-1}}\Omega^{0}...\Omega^{0}\Omega^{m_{r-s}}\Omega ^{0}%
\right] _{j\alpha}\Omega_{i\sigma}^{m_{r}}\partial_{\sigma}\Omega
_{\alpha\beta}^{n-m_{1}-...-m_{r}}\left[ \Omega^{0}\Omega^{m_{r-s-1}}%
\Omega^{0}...\Omega^{0}\Omega^{m_{1}}\right] _{\beta k} \\
& -...-\left[ \Omega^{m_{r}}\Omega^{0}...\Omega^{0}\Omega^{m_{2}}\Omega ^{0}%
\right] _{j\alpha}\left(
\Omega_{\beta\sigma}^{0}\partial_{\sigma}\Omega_{i%
\alpha}^{n-m_{1}-...-m_{r}}\right. \\
& \left. +\Omega_{\alpha\sigma}^{0}\partial_{\sigma}\Omega_{\beta
i}^{n-m_{1}-...-m_{r}}\right) \left[ \Omega^{0}\Omega^{m_{1}}\right] _{\beta
k} \\
& \left. +\left[ \Omega^{m_{r-1}}\Omega^{0}...\Omega^{0}\Omega^{m_{2}}%
\Omega^{0}\right] _{j\alpha}\Omega_{i\sigma}^{m_{r}}\partial_{\sigma}%
\Omega_{\alpha\beta}^{n-m_{1}-...-m_{r}}\left[ \Omega^{0}\Omega^{m_{1}}%
\right] _{\beta k}\right\} +\mbox{cycl.}(ijk) \\
& -I\left( n,r+1\right) .
\end{align*}
Then simplifying the first sum we end up with%
\begin{equation*}
T_{r}=I\left( n,r\right) -I\left( n,r+1\right) ,\ \ 3\leq r\leq n-1.
\end{equation*}
\end{proof}

Using this Lemma and eqs. (\ref{a11}) and (\ref{a12}) we represent the sum (%
\ref{a4}) as%
\begin{equation}
C+D=\sum_{r=1}^{n-1}\left[ I\left( n,r\right) -I\left( n,r+1\right) \right]
=I\left( n,1\right) -I\left( n,n\right) =0.
\end{equation}
Since $D=0$ due to Lemma \ref{L1}, we conclude that $C=0$. That is, by the
induction, the integrability condition (\ref{39}) holds true in all orders.

\subsection{Concluding remarks}

Once (\ref{39}) is satisfied in all orders, from the eqs. (\ref{36})-(\ref%
{38}) we can define $J_{N+i}^{n}$, $\delta_{n}^{ij}$ and $\varpi _{n}^{ij}$
for any $n$ and construct the perturbative solution for the eq. (\ref{20}).
The arbitrariness in the construction of functions $\delta ^{ij}\left(
x,p\right) $ and $\varpi^{ij}\left( x,p\right) $ is described by an
arbitrary vector $J_{i}\left( x,p,\theta\right) $ and arbitrary function $%
f\left( x,p,\theta\right) $.

Note that since we have a symplectic structure $\Omega _{\mu \nu }$ obtained
from a symplectic potential $J_{\mu }$, one can always write a first-order
action leading to this symplectic structure%
\begin{equation}
S=\int dt\left[ J_{\mu }\left( \xi \right) \dot{\xi}^{\mu }-H\left( \xi
\right) \right] ,  \label{S}
\end{equation}%
where $H\left( \xi \right) $ is a Hamiltonian of a system. If an external
field $\omega ^{ij}\left( x\right) $ transforms as a tensor and $J_{i}\left(
x,p,\theta \right) $ transforms as a vector with respect to the Lorentz
group, one can also write a relativistic generalization of the action (\ref%
{S}), like it was done, e.g., in \cite{GKS} for the case of spin
noncommutativity and in \cite{GitKup} for canonical noncommutativity.

\section{Darboux coordinates}

Since $\Omega_{\mu\nu}\left( \xi\right) $ is a Poisson bi-vector, we may
define the Poisson bracket (PB) as%
\begin{equation}
\left\{ \xi^{\mu},\xi^{\nu}\right\} =\Omega_{\mu\nu}\left( \xi\right) .
\label{11}
\end{equation}
The construction of a quantum algebra (\ref{3}) is, in fact, the
quantization of a classical system with a Poisson Bracket (\ref{11}).

One of the methods of quantization of classical systems with noncanonical PB
is a quantization in Darboux coordinates. That is, first one should change
the phase space coordinates: $\xi_{\mu}=\left( x^{i},p_{i}\right)
\rightarrow \eta_{\mu}=\left( y^{i},\pi_{i}\right) $: $\xi_{\mu}=\xi_{\mu}%
\left( \eta\right),$ where new variables have canonical PB:%
\begin{equation}
\left\{ \eta_{\mu},\eta_{\nu}\right\} =\Omega_{\mu\nu}^{0},  \label{41}
\end{equation}
and are called Darboux coordinates. And then construct quantization in these
new phase space coordinates. Since
the existence of a Darboux coordinates is guarantied by the Darboux theorem
\cite{Arnold}, here we are interested in an explicit expression for them.

We will search for a Darboux
coordinates in a perturbative form:%
\begin{equation}
\xi_{\mu}\left( \eta\right)
=\eta_{\mu}+\sum_{n=1}^{\infty}\theta^{n}\xi_{\mu}^{n}\left( \eta\right) .
\label{42}
\end{equation}
The inverse transformation reads%
\begin{equation}
\eta_{\mu}\left( \xi\right)
=\xi_{\mu}+\sum_{n=1}^{\infty}\theta^{n}\eta_{\mu}^{n}\left( \xi\right) .
\label{43}
\end{equation}
Note that once (\ref{43}) is known, the inverse transformation (\ref{42})
may be easily calculated from the algebraic equation $\eta_{\mu}\left(
\xi\left( \eta\right) \right) =\eta_{\mu}$. In particular,
\begin{equation*}
\xi_{\mu}\left( \eta\right) =\eta_{\mu}-\theta\eta_{\mu}^{1}\left(
\eta\right) +\theta^{2}\left( \eta_{\nu}^{1}\left( \eta\right)
\partial_{\nu}\eta_{\mu}^{1}\left( \eta\right) -\eta_{\mu}^{2}\left(
\eta\right) \right) +O\left( \theta^{3}\right) .
\end{equation*}
So, we will look for (\ref{43}) and then construct (\ref{42}). To obtain
differential equation on $\eta_{\mu}\left( \xi\right) $ let us write (\ref%
{41}) as%
\begin{equation*}
\left\{ \eta_{\mu}\left( \xi\right) ,\eta_{\nu}\left( \xi\right) \right\}
=\Omega_{\mu\nu}^{0},
\end{equation*}
and take into account (\ref{11}), so%
\begin{equation}
\partial_{\sigma}\eta_{\mu}\Omega_{\sigma\rho}\partial_{\rho}\eta_{\nu}=%
\Omega_{\mu\nu}^{0}.  \label{44}
\end{equation}
This equation is just a consequence of the definition of a Darboux
coordinates. Substituting in (\ref{44}) perturbative expressions (\ref{9})
and (\ref{43}) one has%
\begin{equation}
\left(
\delta_{\sigma\mu}+\sum_{n=1}^{\infty}\theta^{n}\partial_{\sigma}\eta_{%
\mu}^{n}\right) \left( \Omega_{\sigma\rho}^{0}+\sum_{n=1}^{\infty
}\theta^{n}\Omega_{\sigma\rho}^{n}\right) \left( \delta_{\rho\nu}+\sum
_{n=1}^{\infty}\theta^{n}\partial_{\rho}\eta_{\nu}^{n}\right) =\Omega_{\mu
\nu}^{0}.  \label{45}
\end{equation}
From this equation, equating the powers in $\theta$ in the right and in the
left hand sides, we obtain the differential equations on the functions $%
\eta_{\mu}^{n}$ in each order $n$. Thus, in the first order one has%
\begin{equation}
\partial_{\sigma}\eta_{\mu}^{1}\Omega_{\sigma\nu}^{0}+\Omega_{\mu\sigma}^{0}%
\partial_{\sigma}\eta_{\nu}^{1}+\Omega_{\mu\nu}^{1}=0.  \label{46}
\end{equation}
In the second order we have%
\begin{equation}
\partial_{\sigma}\eta_{\mu}^{2}\Omega_{\sigma\nu}^{0}+\Omega_{\mu\sigma}^{0}%
\partial_{\sigma}\eta_{\nu}^{2}+\Omega_{\mu\nu}^{2}+\partial_{\sigma}\eta_{%
\mu}^{1}\Omega_{\sigma\rho}^{0}\partial_{\rho}\eta_{\nu}^{1}+\partial_{%
\sigma}\eta_{\mu}^{1}\Omega_{\sigma\nu}^{1}+\Omega_{\mu\sigma}^{1}\partial_{%
\sigma}\eta_{\nu}^{1}=0.  \label{47}
\end{equation}
The equation in the $n$-th order reads%
\begin{equation}
\partial_{\sigma}\eta_{\mu}^{n}\Omega_{\sigma\nu}^{0}+\Omega_{\mu\sigma}^{0}%
\partial_{\sigma}\eta_{\nu}^{n}+F_{\mu\nu}^{n}=0,  \label{49}
\end{equation}
where%
\begin{align}
& F_{\mu\nu}^{1}=\Omega_{\mu\nu}^{1},  \label{48} \\
&
F_{\mu\nu}^{2}=\Omega_{\mu\nu}^{2}+\partial_{\sigma}\eta_{\mu}^{1}\Omega_{%
\sigma\rho}^{0}\partial_{\rho}\eta_{\nu}^{1}+\partial_{\sigma}\eta_{\mu}^{1}%
\Omega_{\sigma\nu}^{1}+\Omega_{\mu\sigma}^{1}\partial_{\sigma
}\eta_{\nu}^{1},  \notag \\
& F_{\mu\nu}^{n}=\Omega_{\mu\nu}^{n}+\sum_{m=1}^{n-1}\left[ \partial
_{\sigma}\eta_{\mu}^{n-m}\Omega_{\sigma\rho}^{0}\partial_{\rho}\eta_{%
\nu}^{m}+\partial_{\sigma}\eta_{\mu}^{n-m}\Omega_{\sigma\nu}^{m}+\Omega_{\mu
\sigma}^{n-m}\partial_{\sigma}\eta_{\nu}^{m}\right]  \notag \\
&
+\sum_{m=1}^{n-2}\sum_{k=1}^{n-m-1}\Omega_{\sigma\rho}^{n-m-k}\partial_{%
\sigma}\eta_{\mu}^{k}\partial_{\rho}\eta_{\nu}^{m}=0,\ \ \ n\geq 3.  \notag
\end{align}

\begin{lemma}
\label{L3} The integrability condition for the equation (\ref{49}) is
\begin{equation}
\Omega_{\alpha\beta}^{0}\partial_{\beta}F_{\mu\nu}^{n}+\mathrm{cycl.}%
(\alpha\mu\nu)=0.  \label{50}
\end{equation}
\end{lemma}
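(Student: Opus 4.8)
The plan is to notice that, after a linear change of coordinates, equation (\ref{49}) says exactly that an unknown $1$-form has a prescribed exterior derivative, so that (\ref{50}) is literally the closedness of that prescription and the lemma becomes the Poincar\'e lemma.

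Concretely, I would set $X_{\mu}:=\Omega^{0}_{\mu\sigma}\partial_{\sigma}$. Because $\Omega^{0}$ is constant these vector fields commute and form a global frame: $X_{i}=\partial/\partial p_{i}$ and $X_{N+i}=-\partial/\partial x^{i}$, i.e. $X_{\mu}=\partial/\partial u^{\mu}$ in the linear coordinates $u^{i}=p_{i}$, $u^{N+i}=-x^{i}$. I would also record that $F^{n}_{\mu\nu}$ is antisymmetric in $\mu\nu$ (it must be, since the left-hand side of (\ref{49}) is): this holds for $F^{1}_{\mu\nu}=\Omega^{1}_{\mu\nu}$, and it is propagated by (\ref{48}) because every block there changes sign under $\mu\leftrightarrow\nu$ once one also relabels the summation indices ($m\leftrightarrow n-m$, $k\leftrightarrow m$) and uses the antisymmetry of $\Omega^{0}$ and of the $\Omega^{m}$. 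Since $\Omega^{0}_{\sigma\nu}\partial_{\sigma}\eta^{n}_{\mu}=-X_{\nu}\eta^{n}_{\mu}$ and $\Omega^{0}_{\mu\sigma}\partial_{\sigma}\eta^{n}_{\nu}=X_{\mu}\eta^{n}_{\nu}$, equation (\ref{49}) is just $X_{\mu}\eta^{n}_{\nu}-X_{\nu}\eta^{n}_{\mu}=-F^{n}_{\mu\nu}$, i.e. $d\vartheta^{n}=-\mathcal{F}^{n}$ with the $1$-form $\vartheta^{n}=\eta^{n}_{\mu}\,du^{\mu}$ and the $2$-form $\mathcal{F}^{n}=\tfrac{1}{2}F^{n}_{\mu\nu}\,du^{\mu}\wedge du^{\nu}$.

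Necessity of (\ref{50}) is then immediate: apply $d$ to $d\vartheta^{n}=-\mathcal{F}^{n}$ to get $d\mathcal{F}^{n}=0$, whose components read $\partial_{u^{\alpha}}F^{n}_{\mu\nu}+\mathrm{cycl.}(\alpha\mu\nu)=0$, that is $\Omega^{0}_{\alpha\beta}\partial_{\beta}F^{n}_{\mu\nu}+\mathrm{cycl.}(\alpha\mu\nu)=0$ after restoring $\partial_{u^{\alpha}}=X_{\alpha}$. For sufficiency, assume (\ref{50}), i.e. $d\mathcal{F}^{n}=0$ on the phase space (which is $\mathbb{R}^{2N}$, or any star-shaped chart); the Poincar\'e lemma then produces a $1$-form $\vartheta^{n}$ with $d\vartheta^{n}=-\mathcal{F}^{n}$, whose components $\eta^{n}_{\mu}$ solve (\ref{49}). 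Taking the standard homotopy operator for $\vartheta^{n}$ keeps $\eta^{n}_{\mu}$ in the same function class as $F^{n}_{\mu\nu}$, and the residual freedom $\vartheta^{n}\mapsto\vartheta^{n}+dg$ is precisely the expected gradient ambiguity in the choice of Darboux coordinates.

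I do not expect a genuine obstacle: the only two points that need care are the antisymmetry of $F^{n}_{\mu\nu}$ and the identification $\Omega^{0}_{\mu\sigma}\partial_{\sigma}=\partial/\partial u^{\mu}$, after which both directions reduce to the Poincar\'e lemma on $\mathbb{R}^{2N}$. The one place where bookkeeping would accumulate is if one insisted on deriving (\ref{50}) directly in the $\xi$ variables: then one hits (\ref{49}) with $\Omega^{0}_{\alpha\beta}\partial_{\beta}$, sums cyclically over $(\alpha,\mu,\nu)$, and must verify that the six second-derivative terms in $\eta^{n}$ cancel in pairs (for instance $\Omega^{0}_{\alpha\beta}\Omega^{0}_{\sigma\nu}\partial_{\beta}\partial_{\sigma}\eta^{n}_{\mu}$ against $\Omega^{0}_{\nu\beta}\Omega^{0}_{\alpha\sigma}\partial_{\beta}\partial_{\sigma}\eta^{n}_{\mu}$, via $\beta\leftrightarrow\sigma$ and antisymmetry of $\Omega^{0}$); this is mechanical but long, and the change of variables eliminates it.
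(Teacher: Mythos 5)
Your proposal is correct and is essentially the paper's own argument in coordinate-free dress: the paper likewise identifies $\Omega^{0}_{\mu\sigma}\partial_{\sigma}$ with $\pm$ the flat coordinate derivatives $\partial_{i}^{p}$, $-\partial_{i}$, writes (\ref{49}) in components as curl-type equations (\ref{51}), (\ref{51a}), (\ref{53}), and matches their integrability conditions with the components of (\ref{50}), with actual solvability supplied by the explicit homotopy formulas (\ref{63})--(\ref{64}). Your Poincar\'e-lemma packaging (and the $d^{2}=0$ one-liner for necessity) is a cleaner formulation of the same computation, not a different route.
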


\begin{proof}
One can easily verify that
\begin{equation}
\Omega_{\alpha\beta}^{0}\partial_{\beta}\left( \partial_{\sigma}\eta_{\mu
}^{n}\Omega_{\sigma\nu}^{0}+\Omega_{\mu\sigma}^{0}\partial_{\sigma}\eta_{\nu
}^{n}\right) +\mbox{cycl.}(\alpha\mu\nu)=0~,
\end{equation}
so (\ref{50}) is a necessary condition for a solution of eq. (\ref{49}). To
prove, that (\ref{50}) is a sufficient condition let us write (\ref{49}) in
components. Note that $\Omega_{i\sigma}^{0}\partial_{\sigma}=%
\partial_{i}^{p} $ and $\Omega_{N+i\sigma}^{0}\partial_{\sigma}=-%
\partial_{i} $. For $\mu=i$ and $\nu=j$, (\ref{49}) has a form%
\begin{equation}
\partial_{i}^{p}y_{n}^{j}-\partial_{j}^{p}y_{n}^{i}=F_{ij}^{n}.  \label{51}
\end{equation}
Its integrability condition is
\begin{equation}
\partial_{i}^{p}F_{jk}^{n}+\mbox{cycl.}(ijk)=0,  \label{52}
\end{equation}
which is exactly (\ref{50}) for $\alpha=i,\ \mu=j$ and $\nu=k$. For $\mu=N+i$
and $\nu=N+j$, (\ref{49}) reads%
\begin{equation}
\partial_{j}\pi_{i}^{n}-\partial_{i}\pi_{j}^{n}=F_{N+iN+j}^{n},  \label{51a}
\end{equation}
with the integrability condition
\begin{equation}
\partial_{i}F_{N+jN+k}^{n}+\mbox{cycl.}(ijk)=0,  \label{52a}
\end{equation}
which is eq. (\ref{50}) for $\alpha=N+i,\ \mu=N+j$ and $\nu=N+k$. Finally,
for $\mu=N+i$ and $\nu=j$, eq. (\ref{49}) is%
\begin{equation}
-\partial_{i}y_{n}^{j}-\partial_{j}^{p}\pi_{i}^{n}=F_{N+ij}^{n}.  \label{53}
\end{equation}
If treat this equation as an equation for $y_{n}^{j}$:%
\begin{equation*}
\partial_{i}y_{n}^{j}=-\partial_{j}^{p}\pi_{i}^{n}-F_{N+ij}^{n},
\end{equation*}
then its integrability condition is
\begin{align}
\partial_{k}\left( \partial_{j}^{p}\pi_{i}^{n}+F_{N+ij}^{n}\right)
-\partial_{i}\left( \partial_{j}^{p}\pi_{k}^{n}+F_{N+kj}^{n}\right) & =
\label{54} \\
-\partial_{i}F_{N+kj}^{n}+\partial_{j}^{p}F_{N+iN+k}^{n}-%
\partial_{k}F_{jN+i}^{n} & =0.  \notag
\end{align}
If treat (\ref{53}) as an equation for $\pi_{i}^{n}$:%
\begin{equation*}
\partial_{j}^{p}\pi_{i}^{n}=-\partial_{i}y_{n}^{j}-F_{N+ij}^{n},
\end{equation*}
its integrability condition is
\begin{align}
\partial_{k}^{p}\left( \partial_{i}y_{n}^{j}+F_{N+ij}^{n}\right)
-\partial_{j}^{p}\left( \partial_{i}y_{n}^{k}+F_{N+ik}^{n}\right) & =
\label{55} \\
\partial_{i}F_{kj}^{n}+\partial_{j}^{p}F_{kN+i}^{n}+%
\partial_{k}^{p}F_{N+ij}^{n} & =0.  \notag
\end{align}
Equations (\ref{54}) and (\ref{55}) are (\ref{50}) for $\alpha=N+i,\ \mu=N+k$%
, $\nu=j,$ and for $\alpha=N+i,\ \mu=k$, $\nu=j$ correspondingly. So, the
eq. (\ref{49}) written in components gives eqs. (\ref{51}), (\ref{51a}) and (%
\ref{53}) with integrability conditions (\ref{52}), (\ref{52a}), (\ref{54})
and (\ref{55}), which are exactly eq. (\ref{50}) written in components.
\end{proof}

For the eq. (\ref{46}) in the first order the condition (\ref{50}) is
\begin{equation}
\Omega_{\alpha\beta}^{0}\partial_{\beta}\Omega_{\mu\nu}^{1}+\mathrm{cycl.}%
(\alpha\mu\nu)=0.  \label{56}
\end{equation}
This equation is a Jacobi identity (\ref{6}) in the first order. In the
second order in $\theta$ (\ref{50}) reads%
\begin{equation}
\Omega_{\alpha\beta}^{0}\partial_{\beta}\left(
\Omega_{\mu\nu}^{2}+\partial_{\sigma}\eta_{\mu}^{1}\Omega_{\sigma\rho}^{0}%
\partial_{\rho}\eta_{\nu}^{1}+\partial_{\sigma}\eta_{\mu}^{1}\Omega_{\sigma%
\nu}^{1}+\Omega_{\mu\sigma}^{1}\partial_{\sigma}\eta_{\nu}^{1}\right) +%
\mathrm{cycl.}(\alpha\mu\nu)=0.  \label{57}
\end{equation}
First we calculate%
\begin{align}
& \Omega_{\alpha\beta}^{0}\partial_{\beta}\left( \partial_{\sigma}\eta_{\mu
}^{1}\Omega_{\sigma\rho}^{0}\partial_{\rho}\eta_{\nu}^{1}+\partial_{\sigma
}\eta_{\mu}^{1}\Omega_{\sigma\nu}^{1}+\Omega_{\mu\sigma}^{1}\partial_{\sigma
}\eta_{\nu}^{1}\right) +\mathrm{cycl.}(\alpha\mu\nu)=  \label{58} \\
& \partial_{\sigma}\eta_{\alpha}^{1}\Omega_{\rho\sigma}^{0}\partial_{\rho
}\left( \partial_{\beta}\eta_{\mu}^{1}\Omega_{\beta\nu}^{0}+\Omega_{\mu\beta
}^{0}\partial_{\beta}\eta_{\nu}^{1}\right) +\partial_{\sigma}\eta_{\alpha
}^{1}\left(
\Omega_{\nu\beta}^{0}\partial_{\beta}\Omega_{\sigma\mu}^{1}+\Omega_{\mu%
\beta}^{0}\partial_{\beta}\Omega_{\nu\sigma}^{1}\right) +  \notag \\
& \Omega_{\sigma\alpha}^{1}\partial_{\sigma}\left( \partial_{\beta}\eta
_{\mu}^{1}\Omega_{\beta\nu}^{0}+\Omega_{\mu\beta}^{0}\partial_{\beta}\eta
_{\nu}^{1}\right) +\mathrm{cycl.}(\alpha\mu\nu).  \notag
\end{align}
Then, taking into account (\ref{46}) and (\ref{56}), we represent (\ref{58})
as%
\begin{equation*}
\Omega_{\alpha\sigma}^{1}\partial_{\sigma}\Omega_{\mu\nu}^{1}+\mathrm{cycl.}%
(\alpha\mu\nu).
\end{equation*}
Finally, for (\ref{57}) one gets%
\begin{equation}
\Omega_{\alpha\beta}^{0}\partial_{\beta}\Omega_{\mu\nu}^{2}+\Omega
_{\alpha\beta}^{1}\partial_{\beta}\Omega_{\mu\nu}^{1}+\mathrm{cycl.}(\alpha
\mu\nu)=0,  \label{60}
\end{equation}
which is a Jacobi identity (\ref{6}) in the second order.

In the appendix \ref{AppA} we will prove by the induction that the
integrability condition (\ref{50}) in the $n$-th order is satisfied as a
consequence of a Jacobi identity (\ref{6}), like in the first two orders.

The solutions of the eqs. (\ref{51}) and. (\ref{51a}) are given by%
\begin{align}
y_{n}^{i} & =\int_{0}^{1}p_{j}F_{ji}^{n}\left( x,sp\right) sds+\partial
_{i}^{p}f^{n}\left( x,p\right) ,  \label{63} \\
\pi_{i}^{n} & =\int_{0}^{1}x^{j}F_{N+jN+i}^{n}\left( sx,p\right)
sds+\partial_{i}g^{n}\left( x,p\right) ,  \label{64}
\end{align}
correspondingly, where $f^{n}\left( x,p\right) $ and $g^{n}\left( x,p\right)
$ are arbitrary functions. Then, from (\ref{53}) we may define the relation
between $f^{n}$ and $g^{n}$ in terms of $F_{ij}^{n},\ F_{N+iN+j}^{n}$ and $%
F_{N+ij}^{n}$. So, the arbitrariness of the solution of the eq. (\ref{49})
is described by only one function%
\begin{equation}
g\left( x,p,\theta\right) =\sum_{n=0}^{\infty}\theta^{n}g^{n}\left(
x,p\right) .  \label{65}
\end{equation}
Since the addition of this function in (\ref{43}) does not change the
canonical PB (\ref{41}), it can be interpreted as a generating function of
the canonical transformation.

\section{Particular case, $J_{i}=p_{i}+\partial_{i}r\left( x,p,\protect\theta%
\right) .$}

Let us consider the particular case of our construction, choosing the vector
$J_{i}=p_{i}+\partial_{i}r\left( x,p,\theta\right) .$ In this case, by the
definition (\ref{21b}), $\bar{\omega}^{ij}=0$, and the matrix
\begin{equation}
\Omega=\left(
\begin{array}{cc}
\left( \bar{\delta}^{-1}\right) ^{T}\overline{\varpi}\bar{\delta }%
^{-1} & -\left( \bar{\delta}^{-1}\right) ^{T} \\
\bar{\delta}^{-1} & 0%
\end{array}
\right) .  \label{66}
\end{equation}
That is,%
\begin{equation}
\left\{ p_{i},p_{j}\right\} =\varpi^{ij}\left( x,p\right) =0.  \label{67}
\end{equation}
In section 3 we proved that such $\Omega$ exists and gave an iterative
procedure of its construction. In particular, from eqs. (\ref{27}) and (\ref%
{30}) one finds
\begin{equation}
J_{N+i}=\frac{\theta}{2}\omega^{il}p_{l}+\frac{\theta^{2}}{6}\omega
^{lk}\partial_{k}\omega^{mi}p_{l}p_{m}+O\left( \theta^{3}\right) ,
\label{68}
\end{equation}
where the function $f\left( x,p,\theta\right) $ which describe the
arbitrariness in the construction of $J_{N+i}$ was chosen to cancel the
contribution from $r\left( x,p,\theta\right) $, i.e., $f=r$. And then we
calculate%
\begin{align}
& \delta^{ij}\left( x,p\right) =\delta^{ij}+\frac{\theta}{2}\partial
_{j}\omega^{il}p_{l}+  \label{69} \\
& \theta^{2}\left( \frac{1}{12}\partial_{j}\omega^{kl}\partial_{k}%
\omega^{im}+\frac{1}{6}\omega^{lk}\partial_{j}\partial_{k}\omega^{im}\right)
p_{l}p_{m}+O\left( \theta^{3}\right) .  \notag
\end{align}
From the eqs. (\ref{36})-(\ref{38}) and (\ref{66}) we may see that the
structure of $J_{N+i}^{n},$ and $\delta_{n}^{ij}$ for $n>2$ will be the same
as in first two orders: they will be polynomials in $p$ of order $n$.

The next step is to construct the Darboux coordinates. From (\ref{49}) and (%
\ref{64}) we may see that choosing $g\left( x,p,\theta\right) =0$, one
obtains that $\pi_{i}^{n}=0,\ n\geq1$, i.e.,
\begin{equation}
\pi_{i}=p_{i}.  \label{70}
\end{equation}
For the coordinates $y^{i}$ one has%
\begin{equation}
y^{i}=x^{i}+\frac{\theta}{2}\omega^{ij}\left( x\right) p_{j}+\frac {%
\theta^{2}}{12}\left(
\omega^{lk}\partial_{l}\omega^{ij}+\omega^{lj}\partial_{l}\omega^{ik}\right)
p_{j}p_{k}+O\left( \theta^{3}\right) .  \label{71}
\end{equation}
Note that since $\delta_{n}^{ij}$ are polynomials in $p$ of order $n$, from (%
\ref{51}) and (\ref{53}) we conclude that each $y_{n}^{i}$ with $n\geq1$
also will be a polynomial in $p$ of order $n$. Making the inverse
transformation we find that the expression for $x^{i}$ in terms of $y^{i}$
and $\pi_{i}$ has a form%
\begin{equation}
x^{i}=y^{i}+\sum_{n=1}^{\infty}\Gamma^{i\left( n\right) }\left( y\right)
\left( \theta\pi\right) ^{n},  \label{72}
\end{equation}
where $\Gamma^{i\left( n\right) }\left( y\right)
=\Gamma^{ij_{1}...j_{n}}\left( y\right) .$ That is, each $x_{n}^{i}$ is a
polynomial in $\pi$ of order $n$. In particular, up to the second order one
has:%
\begin{equation}
x^{i}=y^{i}+\theta\Gamma^{ij}\left( y\right) \pi_{j}+\theta^{2}\Gamma
^{ijk}\left( y\right) \pi_{j}\pi_{k}+O\left( \theta^{3}\right) .  \label{73}
\end{equation}
The coefficients $\Gamma^{ij_{1}...j_{n}}\left( y\right) $ can be found
using the standard procedure described in two previous sections, using the
explicit form of $\delta_{n}^{ij}$ and formulas (\ref{63}) and (\ref{53}) to
construct $y_{n}^{i}$, and then constructing the inverse transformation.

However, in this case there is a much simpler procedure, based on the form (%
\ref{72}) of coordinates $x^{i}$, Poisson brackets (\ref{11}) with $\mu =i$
and $\nu =j$, i.e., $\left\{ x^{i},x^{j}\right\} =\theta \omega ^{ij}\left(
x\right) ,$ and the definition of the Darboux coordinates (\ref{41}). First
we write%
\begin{align}
& \left\{ y^{i}+\sum_{n=1}^{\infty }\Gamma ^{i\left( n\right) }\left(
y\right) \left( \theta \pi \right) ^{n},y^{j}+\sum_{n=1}^{\infty }\Gamma
^{j\left( n\right) }\left( y\right) \left( \theta \pi \right) ^{n}\right\} =
\label{74} \\
& \theta \omega ^{ij}\left( y^{i}+\sum_{n=1}^{\infty }\Gamma ^{i\left(
n\right) }\left( y\right) \left( \theta \pi \right) ^{n}\right) .  \notag
\end{align}%
Then, equating coefficients in the left and in the right-hand sides of (\ref%
{74}) in each order in $\theta ,$ one obtains algebraic equations on the
coefficients $\Gamma ^{i\left( n\right) }\left( y\right) $ in terms of $%
\omega ^{ij}$ and lower order coefficients $\Gamma ^{i\left( m\right)
}\left( y\right) ,\ m<n$. The existence of the solution of these equations
is a consequence of the combination of three facts: the existence of a
symplectic structure $\Omega _{\mu \nu }\left( \xi \right) =\Omega _{\mu \nu
}^{0}+O\left( \theta \right) $, such that $\Omega _{ij}\left( \xi \right)
=\theta \omega ^{ij}\left( x\right) $; the existence of a perturbative form (%
\ref{42}) of a Darboux coordinates and the fact that coordinates $x^{i}$
have a polynomial form (\ref{72}) in terms of $\pi _{j}$. The procedure is
analogous to the one proposed in \cite{KV} for the construction of
polydifferential representation of the algebra $[\hat{x}^{j},\hat{x}%
^{k}]=2\alpha \hat{\omega}^{jk}(\hat{x})$, where also was given a solution
of these algebraic equations.

In the first order in $\theta $ we have from (\ref{74}):%
\begin{equation*}
\Gamma ^{ji}-\Gamma ^{ij}=\omega ^{ij},
\end{equation*}%
with a solution $\Gamma ^{ij}=-\omega ^{ij}/2+s^{ij}$, where $s^{ij}$ is an
arbitrary symmetric matrix. Choosing $s^{ij}=0$, which correspond to the
choice $f=r$, we end up with $\Gamma ^{ij}=-\omega ^{ij}/2$. The equation in
the second order is:%
\begin{equation}
8\left( \Gamma ^{ijk}-\Gamma ^{jik}\right) =-\omega ^{mk}\partial _{m}\omega
^{ij}.  \label{75}
\end{equation}%
With a solution:%
\begin{equation}
\Gamma ^{ijk}=\frac{1}{24}\omega ^{km}\partial _{m}\omega ^{ij}+\frac{1}{24}%
\omega ^{jm}\partial _{m}\omega ^{ik}.  \label{76}
\end{equation}%
That is, up to the second order the expression for $x^{i}$ reads
\begin{equation}
x^{i}=y^{i}-\frac{\theta }{2}\omega ^{ij}\pi _{j}+\frac{\theta ^{2}}{24}%
\left( \omega ^{km}\partial _{m}\omega ^{ij}+\omega ^{jm}\partial _{m}\omega
^{ik}\right) \pi _{j}\pi _{k}+O\left( \theta ^{3}\right) .  \label{77}
\end{equation}%
One can verify that the inverse transformation of (\ref{71}) gives exactly
the eq. (\ref{77}).

In the $n$-th order the algebraic equation for $\Gamma ^{i\left( n\right)
}\left( y\right) $ reads%
\begin{equation}
n\Gamma ^{\lbrack ij]i_{2}\dots i_{n}}\left( y\right) =G^{iji_{2}\dots
i_{n}}\left( y\right) ,\ n>1,  \label{e1}
\end{equation}%
where%
\begin{eqnarray}
&&G^{iji_{2}\dots i_{n}}\pi _{i_{2}}...\pi _{i_{n}}=\omega
_{n-1}^{ij}-\sum_{m=1}^{n-1}\left\{ \Gamma ^{i\left( n-m\right) }\left( \pi
\right) ^{n-m},\Gamma ^{j\left( m\right) }\left( \pi \right) ^{m}\right\} ,
\label{e2} \\
&&\omega _{n-1}^{ij}=\left. \frac{d^{n-1}}{d\theta ^{n-1}}\omega ^{ij}\left(
y^{i}+\sum_{n=1}^{\infty }\Gamma ^{i\left( n\right) }\left( y\right) \left(
\theta \pi \right) ^{n}\right) \right\vert _{\theta =0}.  \notag
\end{eqnarray}%
The solution of the eq. (\ref{e1}) is given by%
\begin{equation}
\Gamma ^{ji_{1}\dots i_{n}}=\frac{1}{n(n+1)}\left( G^{ji_{1}i_{2}\dots
i_{n}}+G^{ji_{2}i_{1}i_{3}\dots i_{n}}+\dots +G^{ji_{n}i_{1}i_{2}\dots
i_{n-1}}\right) .  \label{e3}
\end{equation}

Once we know the explicit expression for the Darboux coordinates: $\eta
_{\mu }=\eta _{\mu }\left( \xi \right) $, we can invert it: $\xi _{\mu }=\xi
_{\mu }\left( \eta \right) $ and find the explicit form for the symplectic
structure%
\begin{equation}
\Omega _{\mu \nu }\left( \xi \right) =\left\{ \xi _{\mu }\left( \eta \right)
,\xi _{\nu }\left( \eta \right) \right\} |_{\eta _{\mu }=\eta _{\mu }\left(
\xi \right) }.  \label{Omega1}
\end{equation}%
By the construction (\ref{Omega1}) obey the Jacobi identity (\ref{6}).

\section{General solution}

Now suppose that $J_{i}(x,p,\theta)$ is any arbitrary function. It means
that the functions $\varpi^{ij}\left( x,p\right) $ and $\delta^{ij}\left(
x,p\right) $ will have expressions different from (\ref{67}) and (\ref{69}):%
\begin{align}
& \delta^{ji}\left( x,p\right) =\delta^{ji}+\theta(\partial_{i}^{p}J_{j}^{1}-%
\frac{1}{2}\partial_{j}\omega^{li}p_{l})+O\left( \theta^{2}\right) ,
\label{79} \\
& \varpi^{ij}\left( x,p\right) =\theta\left(
\partial_{i}J_{j}^{1}-\partial_{j}J_{i}^{1}\right) +O\left(
\theta^{2}\right) .  \notag
\end{align}
Consequently, the expressions for $y^{i}$ and $\pi_{i}$ in terms of $x^{i}$
and $p_{i}$ will also change. However, the expression (\ref{72}) for $x^{i}$
in terms of $y^{i}$ and $\pi_{i}$ will remain the same, since $\Omega
_{ij}=\theta\omega^{ij}\left( x\right) $ does not change, and the (\ref{72})
continue obeying the eq.
\begin{equation*}
\left\{ x^{i}(y,\pi),x^{j}(y,\pi)\right\} =\theta\omega^{ij}\left(
x(y,\pi)\right) .
\end{equation*}
What will change now is the expression for $p_{i}$. In the general case we
write
\begin{equation}
p_{i}=\pi_{i}- j_{i}(y,\pi,\theta),\ \ \ \ j_{i}(y,\pi,\theta)=\partial_{i}f(y)+\theta j^1_{i}(y,\pi)+O\left( \theta^{2}\right).  \label{80}
\end{equation}
Which together with (\ref{77}) implies that%
\begin{equation}
\pi_{i}=p_{i}+\theta j_{i}^{1}(x,p)+\theta^{2}\left( \frac{1}{2}\partial
_{l}j_{i}^{1}\omega^{lk}p_{k}+\partial_{l}^{p}j_{i}^{1}j_{l}^{1}\right)
+O\left( \theta^{3}\right) .  \label{81}
\end{equation}
The Poisson brackets between $x^{i}(y,\pi)$ and $p_{i}(y,\pi)$ are%
\begin{align}
& \left\{ x^{i}(y,\pi),p_{j}(y,\pi)\right\} =\delta^{ij}+\theta
(\partial_{i}^{\pi}j_{j}^{1}(y,\pi)-\frac{1}{2}\partial_{j}\omega^{il}(y)%
\pi_{l})-  \label{82} \\
& \frac{\theta^{2}}{2}\left( \partial_{l}\omega^{ik}(y)\pi_{k}\partial
_{l}^{p}j_{j}^{1}(y,\pi)-\omega^{il}(y)\partial_{l}j_{j}^{1}(y,\pi)\right)
+O\left( \theta^{3}\right) ,  \notag
\end{align}%
\begin{align}
& \left\{ p_{i}(y,\pi),p_{j}(y,\pi)\right\}
=\theta(\partial_{j}j_{i}^{1}\left( y,\pi\right)
-\partial_{i}j_{j}^{1}\left( y,\pi\right) )+ \\
& \theta^{2}\left( \partial_{j}j_{i}^{2}\left( y,\pi\right) -\partial
_{i}j_{j}^{2}\left( y,\pi\right) +\partial_{l}j_{i}^{1}\left( y,\pi\right)
\partial_{l}^{p}j_{j}^{1}\left( y,\pi\right)
-\partial_{l}^{p}j_{i}^{1}\left( y,\pi\right) \partial_{l}j_{j}^{1}\left(
y,\pi\right) \right) +O\left( \theta^{3}\right) .  \notag
\end{align}
Taking into account (\ref{77}) and (\ref{81}) we find that%
\begin{align}
& \left\{ x^{i},p_{j}\right\} =\delta^{ij}\left( x,p\right) ,  \label{83} \\
& \left\{ p_{i},p_{j}\right\} =\varpi^{ij}\left( x,p\right) ,  \notag
\end{align}
where
\begin{align}
& \delta^{ij}\left( x,p\right) =\delta^{ij}+\theta(\partial_{i}^{p}j_{j}^{1}-%
\frac{1}{2}\partial_{j}\omega^{li}p_{l})  \label{84} \\
& +\frac{\theta^{2}}{2}\left( \partial_{j}\omega^{il}j_{l}-\frac{1}{2}%
\partial_{m}\partial_{j}\omega^{il}\omega^{mk}p_{k}p_{l}+\frac{1}{2}%
\partial_{j}\Gamma_{0}^{ikl}p_{k}p_{l}+\omega^{il}\partial_{l}j_{j}-%
\partial_{l}\omega^{ik}\partial_{l}^{p}j_{j}^{1}p_{k}\right) +O\left(
\theta^{3}\right) ,  \notag \\
& \varpi^{ij}\left( x,p\right) =\theta\left(
\partial_{j}j_{i}^{1}-\partial_{i}j_{j}^{1}\right) +\frac{\theta^{2}}{2}%
\left( \partial _{l}\left(
\partial_{j}j_{i}^{1}-\partial_{i}j_{j}^{1}\right) \omega
^{lk}p_{k}-2\partial_{l}^{p}\left(
\partial_{j}j_{i}^{1}-\partial_{i}j_{j}^{1}\right) j_{l}^{1}\right) +  \notag
\\
& \theta^{2}\left( \partial_{i}j_{j}^{2}-\partial_{j}j_{i}^{2}+\partial
_{l}j_{i}^{1}\partial_{l}^{p}j_{j}^{1}-\partial_{l}^{p}j_{i}^{1}\partial
_{l}j_{j}^{1}\right) +O\left( \theta^{3}\right) .  \notag
\end{align}
By the construction the Poisson brackets (\ref{83}) together with $\left\{
x^{i},x^{j}\right\} =\theta\omega^{ij}\left( x\right) $, obey the Jacobi
identity and equations (\ref{72}) and (\ref{80}) express phase space
variables $x^{i}$ and $p_{i}$ in terms of Darboux coordinates $y^{i}$ and $%
\pi_{i}.$ In fact, this process of construction of functions $%
\delta^{ij}\left( x,p\right) $ and $\varpi^{ij}\left( x,p\right) $ and
corresponding Darboux coordinates is much simpler then one described in
sections 3 and 4. However, the existence of this perturbative procedure is a
consequence of the existence of the procedures described in previous
sections.

As far as the description of the arbitrariness in solution is concerned, we
see comparing (\ref{79}) and (\ref{84}) that $j_{i}^{1}=J_{i}^{1}$ and $%
j_{i}^{2}$ can be expressed in terms of $J_{i}^{2}$ and combinations of $%
J_{i}^{1}$, $\omega^{lk}$ and its derivatives, etc. That is, the vector $%
J_{i}$ describing the arbitrariness in the construction of the symplectic
structure $\Omega_{\mu\nu}\left( \xi\right) $ can be expressed in terms of
vector $j_{i}$ and vice versa. So, in this direct method of the construction
of the Darboux coordinates and the symplectic structure the arbitrariness is
described by the vector $j_{i}(y,\pi,\theta)$.

\section{Canonical quantization and the star product}

In this section we will discuss the quantization of the obtained in symplectic manifold. From a mathematical point of view the quantization of a symplectic realization of a Poisson manifold is a formal deformation
of a local symplectic groupoid. This problem was considered in \cite{CDF},
where a universal generating function for a formal symplectic groupoid was
provided in terms of Kontsevich trees and the Kontsevich weights \cite%
{Kontsevich}. However, it should be noted that there is no systematic way to compute the
Kontsevich weights beyond the second order in the deformation, see e.g. \cite%
{DS}.

We will construct the quantization in the Darboux coordinates,
using the expressions (\ref{77}) and (\ref{80}) of original momenta $p_{i}$
and coordinates $x^{i}$, choosing the normal ordering, operators of momenta $%
\hat{\pi}_{i}$ stand on the right of the operators of coordinates $\hat {y}%
^{i}$.

Since variables $\eta_{\mu}$ have the canonical PB (\ref{41}), the corresponding operators $\hat{\eta}_{\mu}=\left( \hat{y}%
^{i},\hat{\pi}_{i}\right) $ obey the standard Heisenberg algebra:
\begin{equation}
\left[ \hat{y}^{i},\hat{y}^{j}\right] =\left[ \hat{\pi}_{i},\hat{\pi}_{j}%
\right] =0~,\ \ \ \left[ \hat{y}^{i},\hat{\pi}_{j}\right] =i\delta _{j}^{i}.
\label{85}
\end{equation}
We choose the coordinate representation for this algebra, the operators of
momenta are the derivatives, $\hat{\pi}_{i}=-i\partial_{i}$, and the
operators of coordinates are the operators of multiplication, $\hat{y}%
^{i}=x^{i}$. Due to the normal ordering, we get for the operators $\hat{x}%
^{i}$ and $\hat {p}_{i}$ the following expression:%
\begin{align}
&\hat{p}_{i}^{Db}=-i\partial_{i}+ j_{i}\left( x,-i\partial,\theta\right)~,  \label{86} \\
&\hat{x}_{Db}^{i}  =x^{i}+\frac{i\theta}{2}\omega^{il}\partial_{l}-\theta
^{2}\Gamma^{ilm}\partial_{l}\partial_{m}-i\theta^{3}\Gamma^{ilmn}\partial
_{l}\partial_{m}\partial_{n}+O\left( \theta^{4}\right) ,  \notag
\end{align}
where the differential operator $ j_{i}\left( x,-i\partial,\theta\right) $ correspond to the vector $j_{i}(y,\pi,\theta)$
describing an arbitrariness in our construction. As we will see this arbitrariness may be fixed if to require that the operators $\hat{p}_{i}$ should
be self-adjoint.

Let us calculate the commutator%
\begin{align}
\left[ \hat{x}_{Db}^{i},\hat{x}_{Db}^{j}\right] & =i\theta\omega ^{ij}\left(
x\right) -\frac{\theta^{2}}{2}\omega^{kl}\partial_{k}\omega
^{ij}\partial_{l}-\frac{i\theta^{3}}{8}\omega^{nk}\omega^{ml}\partial
_{n}\partial_{m}\omega^{ij}\partial_{k}\partial_{l}.  \label{87}
\end{align}
Using the eqs. (\ref{86}) and supposing the normal ordering between coordinate and momentum operators, i.e., all $\hat{x}_{Db}^{i}$ should stay on the left from $\hat{p}_{i}^{Db}$, one can see that right hand side of the first commutator can be represented
as%
\begin{equation}
\left[ \hat{x}_{Db}^{i},\hat{x}_{Db}^{j}\right] =i\theta\hat{\omega}%
_{Db}^{ij}=i\theta\hat{\omega}^{ij}\left( \hat{x}_{Db}\right) +i\theta \hat{%
\omega}_{co}^{ij}\left( \hat{x}_{Db},\hat{p}^{Db}\right) ~,  \label{88}
\end{equation}
where%
\begin{equation}
\hat{\omega}_{co}^{ij}\left( \hat{x}_{Db},\hat{p}^{Db}\right) =\frac {%
i\theta^{2}}{3}\omega^{nk}\partial_{k}\omega^{ml}\partial_{n}\partial
_{m}\omega^{ij}\left( \hat{x}_{Db}\right) \hat{p}_{l}^{Db}+O\left(
\theta^{3}\right) ,  \label{89}
\end{equation}
stands for quantum corrections to $\hat{\omega}^{ij}\left( \hat{x}%
_{Db}\right) $, that are needed to make the algebra (\ref{3}) consistent
i.e., obeying the sufficient condition. Note that due to dependence of $\hat{%
\omega}_{Db}^{ij}$ on operators of momenta $\hat{p}^{Db}$, the operators of
coordinates $\hat{x}_{Db}^{i}$ do not form the subalgebra anymore, like it
was stated in the beginning (\ref{1}).

This problem can be solved, introducing the quantum corrections to the
coordinate operators%
\begin{equation}
\hat{x}^{i}=\hat{x}_{Db}^{i}+\hat{x}_{co}^{i},  \label{90}
\end{equation}%
where operator $\hat{x}_{co}^{i}$ should be chosen in the way to cancel the
dependence of the commutator between coordinates (\ref{90}) on momenta, i.e.,%
\begin{equation}
\left[ \hat{x}_{Db}^{i}+\hat{x}_{co}^{i},\hat{x}_{Db}^{j}+\hat{x}_{co}^{j}%
\right] =i\theta \hat{\omega}_{q}^{ij}\left( \hat{x}_{Db}+\hat{x}%
_{co}\right) ~.  \label{91}
\end{equation}%
Since corrections (\ref{89}) starts from the second order in $\theta $ and
are of first order in derivatives $\hat{p}_{l}^{Db}=-i\partial _{l}$ in this
order, the corrections $\hat{x}_{co}^{i}$ to the operators of coordinates $%
\hat{x}_{Db}^{i}$ should start from the third order in $\theta $ and be
quadratic in derivatives to cancel contribution from $\hat{\omega}%
_{co}^{ij}\left( \hat{x}_{Db},\hat{p}^{Db}\right) $, i.e.,
\begin{equation*}
\hat{x}_{co}^{i}=-i\theta ^{3}\Gamma _{1}^{ilm}\left( x\right) \partial
_{l}\partial _{m}+O\left( \theta ^{4}\right) .
\end{equation*}%
Therefore we write for the coordinate operators%
\begin{equation}
\hat{x}^{i}=x^{i}+\frac{i\theta }{2}\omega ^{il}\partial _{l}-\theta
^{2}\Gamma ^{ilm}\partial _{l}\partial _{m}-i\theta ^{3}\left( \Gamma
^{ilmn}\partial _{l}\partial _{m}\partial _{n}+\Gamma _{1}^{ilm}\partial
_{l}\partial _{m}\right) +O\left( \theta ^{4}\right) .  \label{92}
\end{equation}%
The commutator between these operators reads%
\begin{equation}
\left[ \hat{x}^{i},\hat{x}^{j}\right] =i\theta \hat{\omega}^{ij}\left( \hat{x%
}\right) +\frac{i\theta ^{3}}{3}\omega ^{nk}\partial _{k}\omega
^{ml}\partial _{n}\partial _{m}\omega ^{ij}\partial _{l}+i\theta ^{3}\Gamma
_{1}^{\left[ ij\right] l}\partial _{l}+O\left( \theta ^{4}\right) .
\label{93}
\end{equation}%
That is, if to choose $\Gamma _{1}^{ilm}$ obeying the equation%
\begin{equation}
\Gamma _{1}^{\left[ ij\right] l}=-\frac{1}{3}\omega ^{nk}\partial _{k}\omega
^{ml}\partial _{n}\partial _{m}\omega ^{ij},  \label{94}
\end{equation}%
with a solution
\begin{equation}
\Gamma _{1}^{ijk}=\frac{1}{6}\omega ^{nl}\partial _{l}\omega ^{mk}\partial
_{n}\partial _{m}\omega ^{ij}+\frac{1}{6}\omega ^{nl}\partial _{l}\omega
^{mj}\partial _{n}\partial _{m}\omega ^{ik},  \label{95}
\end{equation}%
the eq. (\ref{93}) takes the form%
\begin{equation}
\left[ \hat{x}^{i},\hat{x}^{j}\right] =i\theta \hat{\omega}^{ij}\left( \hat{x%
}\right) +O\left( \theta ^{4}\right) .  \label{96}
\end{equation}%
We see that new operators of coordinates $\hat{x}^{i}$ with coefficients $%
\Gamma _{1}^{ijk}$ defined in (\ref{95}) form the subalgebra (\ref{96}). It
should be noted that the construction of the corrections $\hat{x}_{co}^{i}$
in the next order in $\theta $ will also require the corrections $\hat{\omega%
}_{co}^{ij}\left( \hat{x}\right) $ to the operator $\hat{\omega}^{ij}\left(
\hat{x}\right) $. However, these corrections will depend only on $\hat{x}%
^{i}.$ In this case, the expressions for both $\hat{x}^{i}$ and $\hat{\omega}%
_{q}^{ij}\left( \hat{x}\right) =\hat{\omega}^{ij}\left( \hat{x}\right) +\hat{%
\omega}_{co}^{ij}\left( \hat{x}\right) $ coincide with corresponding
expressions from \cite{KV}. This fact admits us to define a star product on
the algebra (\ref{1}) by the standard rule,
\begin{equation}
f\left( \hat{x}\right) g\left( \hat{x}\right) =\left( f\star g\right) \left(
\hat{x}\right) ,  \label{97}
\end{equation}%
where
\begin{align}
& (f\star g)(x)=f\left( \hat{x}\right) g(x)=fg+\frac{i\theta }{2}\partial
_{i}f\omega ^{ij}\partial _{j}g  \label{98} \\
& -\frac{\theta ^{2}}{4}\left[ \frac{1}{2}\omega ^{ij}\omega ^{kl}\partial
_{i}\partial _{k}f\partial _{j}\partial _{l}g-\frac{1}{3}\omega
^{ij}\partial _{j}\omega ^{kl}\left( \partial _{i}\partial _{k}f\partial
_{l}g-\partial _{k}f\partial _{i}\partial _{l}g\right) \right]  \notag \\
& -\frac{i\theta ^{3}}{8}\left[ \frac{1}{3}\omega ^{nl}\partial _{l}\omega
^{mk}\partial _{n}\partial _{m}\omega ^{ij}\left( \partial _{i}f\partial
_{j}\partial _{k}g-\partial _{i}g\partial _{j}\partial _{k}f\right) \right.
\notag \\
& +\frac{1}{6}\omega ^{nk}\partial _{n}\omega ^{jm}\partial _{m}\omega
^{il}\left( \partial _{i}\partial _{j}f\partial _{k}\partial _{l}g-\partial
_{i}\partial _{j}g\partial _{k}\partial _{l}f\right)  \notag \\
& +\frac{1}{3}\omega ^{ln}\partial _{l}\omega ^{jm}\omega ^{ik}\left(
\partial _{i}\partial _{j}f\partial _{k}\partial _{n}\partial _{m}g-\partial
_{i}\partial _{j}g\partial _{k}\partial _{n}\partial _{m}f\right)  \notag \\
& +\frac{1}{6}\omega ^{jl}\omega ^{im}\omega ^{kn}\partial _{i}\partial
_{j}\partial _{k}f\partial _{l}\partial _{n}\partial _{m}g  \notag \\
& \left. +\frac{1}{6}\omega ^{nk}\omega ^{ml}\partial _{n}\partial
_{m}\omega ^{ij}\left( \partial _{i}f\partial _{j}\partial _{k}\partial
_{l}g-\partial _{i}g\partial _{j}\partial _{k}\partial _{l}f\right) \right]
~+O\left( \theta ^{4}\right) .  \notag
\end{align}%
For more details of this construction see \cite{KV}.

\section{Trace functional}

To formulate the QM on noncommutative spaces we also need an expression for the trace functional on the
algebra with a star product, i.e., a functional $Tr\left( f\right) =\int \mathbf{\Omega }\left( x\right) f\left( x\right) ,$ where $\mathbf{\Omega }\left(
x\right) $,  is an integration measure, satisfying
\begin{equation}
Tr\left( f\star g\right) -Tr\left( fg\right) =0.  \label{trace}
\end{equation}
The condition (\ref{trace}) implies the cyclic property of trace.
The existence of a trace functional for the Kontsevich star-product related to a Poisson bi-vector $\omega ^{ij}$ was proven in \cite{FS}. The recursive procedure for the construction of a trace was proposed in \cite{kup15}, which consists in the following steps. First to find a
function $\mu \left( x\right) $ such that
\begin{equation}
\partial _{i}\left( \mu \omega ^{ij}\right) =0,  \label{4a}
\end{equation}%
and to define the measure as $\mathbf{\Omega }\left( x\right) =d^{N}x\mu \left( x\right) ,$ that is,
\begin{equation}
Tr\left( f\right) =\int d^{N}x\mu \left( x\right) f\left( x\right) .
\label{5a}
\end{equation}%
Note that if $\det\omega^{ij}\neq0$,
a natural integration measure is
$
\mathbf{\Omega }\left( x\right) =dx^{N}/\sqrt{\left\vert \det\omega^{ij}(x)\right\vert }$.
Then, using the gauge freedom in the definition of the star product \cite%
{Kontsevich} to construct a new star product
\begin{equation}
f\star ^{\prime }g=D^{-1}\left( Df\star Dg\right) ,  \label{10a}
\end{equation}%
choosing a gauge operator $D$ in such a way that the condition (\ref{trace}) holds true
for this new product.
Because one may verify that (\ref{trace}) is not satisfied for the star-product (\ref{98}), defined in the previous section. In particular, substituting (\ref{98}) in (\ref{trace}) for two arbitrary functions $f$ and $g$ vanishing on the
infinity, in the second
order in $\theta $ in the left hand side of (\ref{trace}) one gets%
\begin{equation}
-\frac{\theta ^{2}}{4}\int d^{N}x\mu \left( x\right) \left[ \frac{1}{2}%
\omega ^{ij}\omega ^{kl}\partial _{i}\partial _{k}f\partial _{j}\partial
_{l}g-\frac{1}{3}\omega ^{ij}\partial _{j}\omega ^{kl}\left( \partial
_{i}\partial _{k}f\partial _{l}g-\partial _{k}f\partial _{i}\partial
_{l}g\right) \right]   \label{7a}
\end{equation}%
Integrating this expression by parts on $f$ and $g$ and using (\ref{4a}) we
rewrite it as%
\begin{equation}
\frac{\theta ^{2}}{12}\int d^{N}x\partial _{i}f\partial _{l}\left( \mu
\omega ^{ij}\partial _{j}\omega ^{lk}\right) \partial _{k}g,  \label{8a}
\end{equation}%
where the matrix $\partial _{l}\left( \mu \omega ^{ij}\partial _{j}\omega
^{lk}\right) $ is symmetric, i.e.,
\begin{equation}
\partial _{l}\left( \mu \omega ^{ij}\partial _{j}\omega ^{lk}\right)
=\partial _{l}\left( \mu \omega ^{kj}\partial _{j}\omega ^{li}\right) ,
\label{9a}
\end{equation}%
due to the JI and (\ref{4a}). The expression (\ref{8a}) is
different from zero.

Due to (\ref{8a}) and (\ref{9a}) we search the gauge operator $D$ in the form%
\begin{equation}
D=1+\theta ^{2}b^{ik}\partial _{i}\partial _{k}+O\left( \theta ^{3}\right) .
\label{11a}
\end{equation}%
In this case the new star product reads%
\begin{equation}
f\star ^{\prime }g=f\star g-2\theta ^{2}b^{ik}\partial _{i}f\partial
_{k}g+O\left( \theta ^{3}\right) .  \label{12a}
\end{equation}%
The condition (\ref{trace}) for the star product (\ref{12a}) in the second
order implies that%
\begin{equation}
\frac{\theta ^{2}}{12}\partial _{i}f\partial _{l}\left( \mu \omega
^{ij}\partial _{j}\omega ^{lk}\right) \partial _{k}g-2\theta ^{2}\mu
b^{ik}\partial _{i}f\partial _{k}g=0.  \label{13a}
\end{equation}%
That is,
\begin{equation}
b^{ik}=\frac{1}{24\mu }\partial _{l}\left( \mu \omega ^{ij}\partial
_{j}\omega ^{lk}\right) .  \label{14a}
\end{equation}%
We conclude that given a Poisson bi-vector $\omega ^{ij}$ and a function $%
\mu \left( x\right) $ obeying (\ref{4a}), the modified star product (\ref{12a}) admits
the trace (\ref{5a}).

\section{Quantization scheme and examples}

Now we have all necessary tools to define the consistent noncommutative quantum mechanics.

\textit{The Hilbert space} is determined as a space of
complex-valued functions which are square-integrable with a measure $\mathbf{%
\Omega }\left( x\right) $.

\textit{The internal product} between two states $%
\varphi \left( x\right) $ and $\psi \left( x\right) $ from the Hilbert space
is defined as
\begin{equation}
\left\langle \varphi \right\vert \left. \psi \right\rangle =Tr\left( \varphi
^{\ast }\star^{\prime } \psi \right) .  \label{scalar}
\end{equation}%

\textit{The action of the coordinate operators $\hat{x}^{i}$} on functions $\psi (x)$ from
the Hilbert space is defined through the modified star product (\ref{12a}), for any function $%
V\left( x\right) $ one has
\begin{equation}
{V}\left( \hat{x}\right) \psi (x)=V(x)\star^{\prime } \psi (x).  \label{15a}
\end{equation}%
The definitions (\ref{scalar}) and (\ref{15a}) means that the coordinate
operators are self-adjoint with respect to the introduced scalar product:
\begin{equation}
\langle\hat{x}^{i}\varphi|\psi\rangle=Tr\left(\left(x^i\star^{\prime }
\varphi\right)^{\ast}\star^{\prime }\psi\right)=Tr\left(
\varphi^{\ast}\star^{\prime } \left(x^i\star^{\prime }\psi\right)\right)=\langle\varphi|\hat{x}^{i}\psi\rangle.
\end{equation}

\textit{The momentum operators} $\hat{p}_{i}$ are fixed from the condition
that they also should be self-adjoint with respect to (\ref{scalar}). One of the
possibilities is to choose it in the form
\begin{equation}
\hat{p}_{i}=-i\partial _{i}-\frac{i}{2}\partial _{i}\ln \mu \left( x\right) .
\label{p}
\end{equation}%
One can easily verify that in this case $\left\langle \hat{p}_{i}\varphi
\right\vert \left. \psi \right\rangle =\left\langle \varphi \right\vert
\left. \hat{p}_{i}\psi \right\rangle .$

The choice (\ref{p}) for the representation of the momentum operators imply that they commute: $[\hat{p}_{i},\hat{p}_{j}]=0$. The commutator between $\hat{x}^{i}$ and $\hat{p}_{j}$ is
\begin{equation}
\left[ \hat{x}^{i},\hat{p}_{j}\right] =i\delta^i_j-\frac{i\theta}{2}\left( \partial_j\omega^{il}\left(\hat{x}%
\right)\hat{p}_{l}+ i\partial_j \left(\omega^{il}\partial_l\ln \mu\right) \left( \hat x\right) \right)+O\left( \theta ^{2}\right).  \label{xp}
\end{equation}
That is, as it was stated in the beginning the complete algebra of commutation relations involving $\hat{x}^{i}$ and $\hat{p}_{j}$ is a deformation in $\theta$ of a standard Heisenberg algebra.

\subsection{Free particle}

As an example we consider the eigenvalue problem for the Hamiltonian
\begin{equation}
\hat{H}=\frac{1}{2}\hat{p}_{i}\hat{p}^{i},  \label{103}
\end{equation}
describing a free particle. In this
case (\ref{103}) takes the form
\begin{equation}
\hat{H}=-\frac{1}{2}\left( \partial_{i}+\frac{1}{2}\partial _{i}\ln \mu \left( x\right) \right) ^{2},  \label{104}
\end{equation}
The eigenstates of this Hamiltonian are
\begin{equation}
\psi=e^{-ik_{i}x^{i}}\mu(x)^{-\frac{1}{2}},  \label{105}
\end{equation}
with eigenvalues%
\begin{equation}
E=\frac{1}{2}k_{i}k^{i},  \label{106}
\end{equation}
where $k_{i}$ are the eigenvalue of momenta $\hat{p}_{i}$. We see that the
spectrum of energy $E$ is non-negative and continuous like in undeformed
case, however the eigenstates differ from the plane wave on the commutative
space by the factor $\mu(x)^{-\frac{1}{2}}$, which may lead to different phenomenological consequences studying the processes of scattering of plane waves on curved noncommutative spaces.

\subsection{Three-dimensional isotropic harmonic oscillator}

It should be noted that the quantum mechanical scale of energies is rather
different from the Planck scale, therefore from the physical point of view it is useless to look for the effects caused by the noncommutativity in QM. However some important properties like
preservation of symmetries and corresponding consequences can be studied
already in QM. In particular,  it is well known fact that the canonical
noncommutativity, $\left[ \hat{x}^{i},\hat{x}^{j}\right] =i\theta ^{ij},$ breaks the
rotational symmetry of a particle in a central potential, which removes the degeneracy of the energy
levels \cite{Chaichian} over the magnetic quantum number $m$. This fact leads to the bounds of noncommutativity. The same logic remains in the field theory \cite{AGSV}. We will show here that the noncommutativity can be introduced in a way to preserve the symmetries and the corresponding degeneracy.

Let us consider three-dimensional isotropic harmonic oscillator described by the Hamiltonian
\begin{equation}
\hat{H}=\frac{\hat{p}^{2}}{2}+ \frac{\omega^2}{2}\hat{r}^{2}.
\label{H}
\end{equation}
where $r^{2}=x^{2}+y^{2}+z^{2}$. And let us choose the external antisymmetric field in a way to preserve the rotational symmetry of the system, $\omega ^{ij}(x)=\varepsilon ^{ijk}x^{k}$. The corresponding algebra of noncommutative coordinates is the algebra of fuzzy sphere \cite{Fuzzy},
\begin{equation}
\left[ \hat{x}^{i},\hat{x}^{j}\right] =i\theta\varepsilon ^{ijk}\hat{x}^{k}.\label{fuzzy}
\end{equation}
Note that the rotationally invariant NC space can be obtained as a foliation of fuzzy spheres \cite{Hammou}.

One may see that any function $\mu (r^{2})$ obeys the equation (\ref{4a}):
\begin{equation*}
\partial _{i}\left( \mu (r^{2})\varepsilon ^{ijk}x^{k}f\left( r^{2}\right)
\right) =0,
\end{equation*}%
and can be chosen as a measure to define a trace functional. For simplicity
we set $\mu \left( x\right) =1$. So, the momentum operators are just
derivatives $\hat{p}_{i}=-i\partial _{i}$. The specific choice of the external field implies that the coordinate operators transform as a vectors under rotations:
\begin{equation}
\left[ L_{i},\hat x^j \right] =i\varepsilon ^{ijk}x^{k},
\end{equation}%
where $L^{i}=-i\varepsilon ^{ijk}x_{j}\partial _{k}$ is the angular momentum operator.
That is, the rotational symmetry of (\ref{H}) will be conserved, preserving the degeneracy of the energy
spectrum over the magnetic quantum number $m$. The Hamiltonian can be written as
\begin{equation}
\hat{H}=-\frac{1}{2}\Delta+\frac{\omega^2}{2}{r}^{2}\star ^{\prime }= -\frac{1}{2}\Delta+\frac{\omega^2}{2}{r}^{2}+
\frac{\theta^{2} \omega^2}{24}L^{2}+O\left( \theta ^{4}\right) ,
\end{equation}%
where $L^{2}=L_{x}^{2}+L_{y}^{2}+L_{z}^{2}$ is the the orbital
momentum. We use the usual perturbation theory to calculate the leading corrections to
the energy levels,%
\begin{equation}
\Delta E_{n}^{NC}=\left\langle \psi ^{0}\left\vert \frac{\theta^{2} \omega^2}{24}L^{2}\right\vert
\psi ^{0}\right\rangle =\frac{\theta ^{2}\omega^{2}l(l+1)}{24},
\end{equation}%
where $\left\vert \psi ^{0}\right\rangle =R_{nl}(r)Y^m_l(\vartheta,\varphi) $ is
the unperturbed wave function, corresponding to the energy $%
E_{n}=\omega\left(n+\frac{3}{2}\right)$, here $n$ is a principal quantum number and $l$ is the azimuthal quantum number. The corresponding nonlocality is given by
\begin{equation}
\Delta x\Delta y\geq \frac{\theta ^{2}}{4}\left|m\right| ,
\end{equation}
where $m=-l,...,l$ and $l=0,1,...,n$. That is, the more the energy of the system the more the nonlocality.

\section{Conclusions and perspectives}

In conclusion we would like to overview some perspectives of the present
activity. As it was stated in the introduction our aim is to construct the
consistent quantum field theory on noncommutative spaces of general form. In
this connection the next step is to construct a relativistic generalization
of the proposed nonrelativistic quantum mechanics. If the external field $\omega^{\rho\sigma}(x)$ transforms as a two tensor with respect to a Lorentz group, e.g., $\omega^{\rho\sigma}(x)=\varepsilon^{\rho\sigma\lambda}x_\lambda f(x^2)$ in $(2+1)$ dimensions, the operators $\hat{x}^{\rho }=x^{\rho }+i\theta /2\omega ^{\rho\sigma }\partial
_{\sigma }+O\left( \theta ^{2}\right) $ and $\hat{p}_{\rho }=-i\partial _{\rho
}-i\partial _{\rho }\mu \left( x\right) $ will transform as vectors. This fact may be used to construct relativistic wave equations on coordinate dependent NC space-time. In particular, the free noncommutative Klein-Gordon equation can be written as
\begin{equation}
\left[ \left(\partial_\rho-\frac{1}{2}\partial_\rho \ln{\mu}\right)^2-m^2\right]\Phi=0. \label{KG}
\end{equation}
This equation is covariant under Lorentz transformation. The action leading to the equation (\ref{KG}) is
\begin{equation}
S_{free}=\int d^{N}x\mu \left( x\right)\left[\frac{1}{2}{\left(\hat p_\rho \Phi\right)}^{\ast}\star ^{\prime }\left(\hat p^\rho \Phi\right)+\frac{m^2}{2}\Phi^{\ast}\star ^{\prime }\Phi\right].
\end{equation}
One may also add the interaction term to this action, e.g., \begin{equation}S_{int}=\frac{\lambda}{4!}\int d^{N}x\mu \left( x\right)\Phi^{\ast}\star ^{\prime }\Phi\star ^{\prime }\Phi\star ^{\prime }\Phi^{\ast},\end{equation} and to study the corresponding quantum theory. Different questions may be addressed here like unitarity and renormalizability. An important problem here is to describe the physical mechanism which define the external field $\omega^{\rho\sigma}(x)$.

\section*{Acknowledgements}

I am grateful to Marcelo Gomes and Pedro Gomes for fruitful discussions and to Dima Vassilevich for useful comments.

\appendix

\section{Integrability condition for the Darboux coordinates}

\label{AppA} Let us consider the eq. (\ref{49}). In section 4 it was shown
that integrability condition for this equation in the first two orders in $%
\theta$ is satisfied as a consequence of the Jacobi identity (\ref{6}) for
the symplectic structure $\Omega_{\mu\nu}$. Suppose that the solution of (%
\ref{49}) was found up to the $(n-1)$-th order. Here we will prove that the
integrability condition (\ref{50}) in the $n$-th order is exactly Jacobi
identity (\ref{6}) in the $n$-th order. Let us write
\begin{align}
& L=\sum_{m=1}^{n-1}\Omega_{\alpha\beta}^{0}\partial_{\beta}\left[
\partial_{\sigma}\eta_{\mu}^{n-m}\Omega_{\sigma\rho}^{0}\partial_{\rho}%
\eta_{\nu}^{m}+\partial_{\sigma}\eta_{\mu}^{n-m}\Omega_{\sigma\nu}^{m}+%
\Omega_{\mu\sigma}^{n-m}\partial_{\sigma}\eta_{\nu}^{m}\right]  \label{b1} \\
& +\sum_{m=1}^{n-2}\sum_{k=1}^{n-m-1}\Omega_{\alpha\beta}^{0}\partial_{\beta
}\left(
\Omega_{\sigma\rho}^{n-m-k}\partial_{\sigma}\eta_{\mu}^{k}\partial_{\rho}%
\eta_{\nu}^{m}\right) +\mathrm{cycl.}(\alpha\mu\nu).  \notag
\end{align}
Using JI (\ref{40}) in the first line of (\ref{b1}) one rewrites $L$ as%
\begin{align}
& L=  \label{b2} \\
& \sum_{m=1}^{n-1}\left[ \partial_{\sigma}\eta_{\mu}^{m}\Omega_{\rho\sigma
}^{0}\partial_{\rho}\left( \partial_{\beta}\eta_{\nu}^{n-m}\Omega
_{\beta\alpha}^{0}+\Omega_{\nu\beta}^{0}\partial_{\beta}\eta_{\alpha}^{n-m}+%
\Omega_{\nu\alpha}^{n-m}\right) -\Omega_{\mu\sigma}^{m}\partial
_{\sigma}\left(
\partial_{\beta}\eta_{\nu}^{n-m}\Omega_{\beta\alpha}^{0}+\Omega_{\nu%
\beta}^{0}\partial_{\beta}\eta_{\alpha}^{n-m}\right) \right]  \notag \\
& +\sum_{m=1}^{n-2}\sum_{k=1}^{n-m-1}\left[ \Omega_{\alpha\beta}^{0}%
\partial_{\beta}\Omega_{\sigma\rho}^{n-m-k}\partial_{\sigma}\eta_{\mu}^{k}%
\partial_{\rho}\eta_{\nu}^{m}+\partial_{\sigma}\eta_{\mu}^{k}\Omega
_{\rho\sigma}^{n-m-k}\partial_{\rho}\left(
\Omega_{\nu\beta}^{0}\partial_{\beta}\eta_{\alpha}^{k}+\Omega_{\beta%
\alpha}^{0}\partial_{\beta}\eta_{\nu}^{k}\right) \right.  \notag \\
& \left. -\partial_{\sigma}\eta_{\mu}^{m}\left( \Omega_{\alpha\beta
}^{n-m-k}\partial_{\beta}\Omega_{\sigma\nu}^{k}+\mathrm{cycl.}(\alpha\sigma
\nu)\right) \right] +\mathrm{cycl.}(\alpha\mu\nu).  \notag
\end{align}
Then, taking into account the eq. (\ref{49}) in the first line of (\ref{b2})
we represent it in the form%
\begin{align*}
& L=\sum_{m=1}^{n-1}\Omega_{\mu\sigma}^{m}\partial_{\sigma}\Omega_{\nu\alpha
}^{n-m}+\sum_{m=1}^{n-2}\left\{ \partial_{\sigma}\eta_{\mu}^{m}\Omega
_{\rho\sigma}^{0}\partial_{\rho}\left(
\partial_{\beta}\eta_{\nu}^{n-m}\Omega_{\beta\alpha}^{0}+\Omega_{\nu%
\beta}^{0}\partial_{\beta}\eta_{\alpha
}^{n-m}+\Omega_{\nu\alpha}^{n-m}\right) \right. \\
& +\sum_{k=1}^{n-m-1}\left[ \Omega_{\mu\sigma}^{m}\partial_{\sigma}\left(
\partial_{\beta}\eta_{\nu}^{n-m-k}\Omega_{\beta\gamma}^{0}\partial_{\gamma
}\eta_{\alpha}^{k}\right)
+\Omega_{\alpha\beta}^{0}\partial_{\beta}\Omega_{\sigma\rho}^{n-m-k}%
\partial_{\sigma}\eta_{\mu}^{k}\partial_{\rho}\eta_{\nu}^{m}\right. \\
& \left. \left.
+\partial_{\sigma}\eta_{\mu}^{m}\Omega_{\rho\sigma}^{n-m-k}\partial_{\rho}%
\left( \partial_{\beta}\eta_{\nu}^{k}\Omega
_{\beta\alpha}^{0}+\Omega_{\nu\beta}^{0}\partial_{\beta}\eta_{\alpha}^{k}+%
\Omega_{\nu\alpha}^{k}\right) \right] \right\} \\
& +\sum_{m=1}^{n-3}\sum_{k=1}^{n-m-2}\sum_{l=1}^{n-m-k-1}\Omega_{\mu\sigma
}^{m}\partial_{\sigma}\left( \Omega_{\beta\gamma}^{n-m-k-l}\partial_{\beta
}\eta_{\nu}^{l}\partial_{\gamma}\eta_{\alpha}^{k}\right) +\mathrm{cycl.}%
(\alpha\mu\nu).
\end{align*}
Using the JI in the second term of the second line and also the eq. (\ref{49}%
) for $n=1$, we have%
\begin{align*}
& L=\sum_{m=1}^{n-1}\Omega_{\mu\sigma}^{m}\partial_{\sigma}\Omega_{\nu\alpha
}^{n-m}+\sum_{m=1}^{n-2}\partial_{\sigma}\eta_{\mu}^{m}\Omega_{\rho%
\sigma}^{0}\partial_{\rho}\left[ \partial_{\beta}\eta_{\nu}^{n-m}\Omega_{%
\beta
\alpha}^{0}+\Omega_{\nu\beta}^{0}\partial_{\beta}\eta_{\alpha}^{n-m}+%
\Omega_{\nu\alpha}^{n-m}\right. \\
& +\sum_{k=1}^{n-m-1}\left. \left( \Omega_{\mu\beta}^{n-m-k}\partial
_{\beta}\eta_{\nu}^{k}+\Omega_{\beta\nu}^{n-m-k}\partial_{\beta}\eta_{%
\mu}^{k}\right) \right] \\
&
+\sum_{m=1}^{n-3}\sum_{k=1}^{n-m-2}\partial_{\sigma}\eta_{\mu}^{m}\Omega_{%
\rho\sigma}^{k}\partial_{\rho}\left\{ \left(
\Omega_{\nu\beta}^{0}\partial_{\beta}\eta_{\alpha}^{n-m-k}+\Omega_{\beta%
\alpha}^{0}\partial_{\beta}\eta_{\nu}^{n-m-k}+\Omega_{\nu\alpha}^{n-m-k}%
\right) \right. \\
& +\sum_{l=1}^{n-m-k-1}\left. \left(
\partial_{\beta}\eta_{\nu}^{l}\Omega_{\beta\alpha}^{n-m-k-l}+\Omega_{\nu%
\beta}^{n-m-k-l}\partial_{\beta}\eta_{\alpha}^{l}\right) \right\} +\mathrm{%
cycl.}(\alpha\mu\nu).
\end{align*}

We will need the following Lemma:

\begin{lemma}
\label{L4} The identities
\begin{align}
& M=  \label{i1} \\
& \sum_{m=1}^{n-3}\sum_{k=1}^{n-m-2}\sum_{l=1}^{n-m-k-1}\left[
\partial_{\sigma}\eta_{\mu}^{m}\Omega_{\rho\sigma}^{0}\partial_{\rho}\left(
\partial_{\beta}\eta_{\nu}^{l}\Omega_{\beta\gamma}^{n-m-k-l}\partial_{\gamma
}\eta_{\alpha}^{k}\right) \right.  \notag \\
& \left.
+\partial_{\sigma}\eta_{\mu}^{m}\Omega_{\rho\sigma}^{k}\partial_{\rho}\left(
\partial_{\beta}\eta_{\nu}^{n-m-k-l}\Omega_{\beta
\gamma}^{0}\partial_{\gamma}\eta_{\alpha}^{l}\right) \right]  \notag \\
&
+\sum_{m=1}^{n-4}\sum_{k=1}^{n-m-3}\sum_{l=1}^{n-m-k-2}%
\sum_{p=1}^{n-m-k-l-1}\partial_{\sigma}\eta_{\mu}^{m}\Omega_{\rho\sigma}^{p}%
\partial_{\rho}\left( \partial_{\beta}\eta_{\nu}^{l}\Omega_{\beta\gamma
}^{n-m-k-l-p}\partial_{\gamma}\eta_{\alpha}^{k}\right)  \notag \\
& +\mathrm{cycl.}(\alpha\mu\nu)=0,  \notag
\end{align}
and%
\begin{equation}
N=\sum_{m=1}^{n-2}\sum_{k=1}^{n-m-1}\partial_{\sigma}\eta_{\mu}^{m}\Omega_{%
\rho\sigma}^{0}\partial_{\rho}\left( \partial_{\beta}\eta_{\nu
}^{n-m-k}\Omega_{\beta\gamma}^{0}\partial_{\gamma}\eta_{\alpha}^{k}\right) +%
\mathrm{cycl.}(\alpha\mu\nu)=0.  \label{i2}
\end{equation}
hold true.
\end{lemma}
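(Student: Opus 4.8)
The plan is to recognise $M$ and $N$ as the order-$\theta^{n}$ coefficients of two honest Jacobi identities, so that the heavy combinatorial sums are never rearranged term by term. Write $\{f,g\}_{0}=\partial_{\sigma}f\,\Omega_{\sigma\rho}^{0}\,\partial_{\rho}g$ for the canonical bracket and $\{f,g\}=\partial_{\sigma}f\,\Omega_{\sigma\rho}\,\partial_{\rho}g$ for the full Poisson bracket (\ref{11}); since $\Omega_{\rho\sigma}^{0}=-\Omega_{\sigma\rho}^{0}$ one has $\partial_{\sigma}\eta_{\mu}^{m}\,\Omega_{\rho\sigma}^{0}\,\partial_{\rho}X=-\{\eta_{\mu}^{m},X\}_{0}$, and the analogous relation with $\Omega^{0}$ replaced by $\Omega$. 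Introduce the shift $B_{\mu}:=\eta_{\mu}(\xi)-\xi_{\mu}=\sum_{m\ge 1}\theta^{m}\eta_{\mu}^{m}$; by the induction hypothesis it is known through order $\theta^{n-1}$, which is more than enough, since every term occurring in (\ref{i1}) and (\ref{i2}) involves only components $\eta_{\mu}^{m}$ with $m\le n-2$.

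I would first dispose of $N$. Expanding $\{B_{\mu},\{B_{\nu},B_{\alpha}\}_{0}\}_{0}$ in powers of $\theta$ and collecting the coefficient of $\theta^{n}$ yields $\sum_{i+j+k=n,\ i,j,k\ge 1}\partial_{\sigma}\eta_{\mu}^{i}\,\Omega_{\sigma\rho}^{0}\,\partial_{\rho}(\partial_{\beta}\eta_{\nu}^{j}\,\Omega_{\beta\gamma}^{0}\,\partial_{\gamma}\eta_{\alpha}^{k})$; comparing with (\ref{i2}), after setting $j=n-m-k$ and using the antisymmetry of $\Omega^{0}$, gives $N=-[\{B_{\mu},\{B_{\nu},B_{\alpha}\}_{0}\}_{0}+\mathrm{cycl.}(\alpha\mu\nu)]_{\theta^{n}}$. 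Because $\Omega^{0}$ is a constant antisymmetric matrix, $\{\,\cdot\,,\,\cdot\,\}_{0}$ is a Poisson bracket, so $\{B_{\mu},\{B_{\nu},B_{\alpha}\}_{0}\}_{0}+\mathrm{cycl.}(\alpha\mu\nu)=0$ identically in $\theta$, whence $N=0$.

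Next I would treat $M$ together with $N$ by the same device applied to the full bracket. Expanding $\{B_{\mu},\{B_{\nu},B_{\alpha}\}\}$ and extracting the coefficient of $\theta^{n}$ produces $\sum\partial_{\sigma}\eta_{\mu}^{m}\,\Omega_{\rho\sigma}^{p}\,\partial_{\rho}(\partial_{\beta}\eta_{\nu}^{l}\,\Omega_{\beta\gamma}^{q}\,\partial_{\gamma}\eta_{\alpha}^{k})$ over all $m,l,k\ge 1$ and $p,q\ge 0$ with $m+l+k+p+q=n$; the part with $p=q=0$ is precisely $N$, and the parts with $p\ge 1,\ q=0$, with $p=0,\ q\ge 1$, and with $p,q\ge 1$ are in one-to-one correspondence, once the summation limits are matched, with the three remaining families of terms in (\ref{i1}). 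Hence the coefficient of $\theta^{n}$ in $\{B_{\mu},\{B_{\nu},B_{\alpha}\}\}+\mathrm{cycl.}(\alpha\mu\nu)$ equals $-(M+N)$. But $\Omega_{\mu\nu}$ is a Poisson bi-vector --- this is exactly equation (\ref{6}), which was established in all orders in Section~3 --- so the Jacobi identity $\{B_{\mu},\{B_{\nu},B_{\alpha}\}\}+\mathrm{cycl.}(\alpha\mu\nu)=0$ holds as an identity of functions of $\xi$, in particular for $B_{\mu},B_{\nu},B_{\alpha}$ (and it holds order by order, so truncating $B_{\mu}$ at order $n-2$ does no harm). Therefore $M+N=0$, and together with the previous step $M=0$, which is the claim.

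The only genuinely non-formal work is the index matching in the third paragraph: one must check that the explicit sums in (\ref{i1}), with their precise summation bounds, form a bijective copy of those terms in the $\theta^{n}$-part of $\{B_{\mu},\{B_{\nu},B_{\alpha}\}\}$ that are not already absorbed into $N$, the split being governed by the orders $p$ and $q$ of the outer and inner factors of $\Omega$, with $(p,q)=(0,0)$ giving back $N$. This is a finite, mechanical verification, but it is the place where a sign or a summation limit is easiest to get wrong; everything else reduces to a one-line appeal to the Jacobi identity, first for the constant Poisson structure $\Omega^{0}$ and then for the full one $\Omega$.
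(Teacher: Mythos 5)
Your proof is correct, but it takes a genuinely different route from the paper. The paper proves the lemma by brute-force resummation: it first establishes the auxiliary identity (\ref{b3}) by direct manipulation, then applies the component Jacobi identity (\ref{40}) to convert the leftover $\partial_{\rho}\Omega^{n-m-k-l}_{\beta\gamma}$ term into the quadruple sum (\ref{b4}), and substitutes back to exhibit the cancellation inside $M$; the vanishing of $N$ is dismissed as ``straightforward.'' You instead package the generating functions $B_{\mu}=\sum_{m\geq 1}\theta^{m}\eta^{m}_{\mu}$ and recognize $N$ and $M+N$ as (minus) the $\theta^{n}$-coefficients of $\{B_{\mu},\{B_{\nu},B_{\alpha}\}_{0}\}_{0}+\mathrm{cycl.}$ and $\{B_{\mu},\{B_{\nu},B_{\alpha}\}\}+\mathrm{cycl.}$ respectively, so that both identities become instances of ``a Poisson bracket satisfies the Jacobi identity.'' I checked the $(p,q)$-sector bookkeeping: the four families of terms (the two triple sums and the quadruple sum of (\ref{i1}), plus (\ref{i2})) do exhaust exactly the sectors $(0,q\geq 1)$, $(p\geq 1,0)$, $(p\geq 1,q\geq 1)$ and $(0,0)$ of the expansion, with the overall sign coming from $\Omega_{\rho\sigma}=-\Omega_{\sigma\rho}$, so the identification is sound. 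Your approach buys conceptual transparency and eliminates the ad hoc identity (\ref{b3}); the paper's approach stays entirely within the component formalism used throughout the appendix. One point you should make explicit rather than citing ``all orders'' of (\ref{6}) from Section~3: inside the induction only the Jacobi identity (\ref{40}) through order $n-1$ is available, but since each factor $B$ carries at least one power of $\theta$, the bivector Jacobi identity enters the $\theta^{n}$-coefficient of $\{B_{\mu},\{B_{\nu},B_{\alpha}\}\}+\mathrm{cycl.}$ only at orders $\leq n-3$ (the second-derivative terms cancel by antisymmetry alone), so the induction hypothesis suffices and no circularity arises.
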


\begin{proof}
One can verify that%
\begin{align}
&
\sum_{m=1}^{n-3}\sum_{k=1}^{n-m-2}\sum_{l=1}^{n-m-k-1}\partial_{\sigma}%
\eta_{\mu}^{m}\Omega_{\rho\sigma}^{0}\partial_{\rho}\left( \partial_{\beta
}\eta_{\nu}^{l}\Omega_{\beta\gamma}^{n-m-k-l}\partial_{\gamma}\eta_{\alpha
}^{k}\right) +\mathrm{cycl.}(\alpha\mu\nu)=  \label{b3} \\
& \sum_{m=1}^{n-3}\sum_{k=1}^{n-m-2}\sum_{l=1}^{n-m-k-1}\left[
\partial_{\sigma}\eta_{\mu}^{m}\Omega_{\rho\sigma}^{0}\partial_{\beta}\eta_{%
\nu}^{l}\partial_{\rho}\Omega_{\beta\gamma}^{n-m-k-l}\partial_{\gamma
}\eta_{\alpha}^{k}+\partial_{\sigma}\eta_{\mu}^{m}\Omega_{\rho\sigma}^{k}%
\partial_{\rho}\left( \partial_{\beta}\eta_{\nu}^{n-m-k-l}\Omega
_{\beta\gamma}^{0}\partial_{\gamma}\eta_{\alpha}^{l}\right) \right]  \notag
\\
& +\mathrm{cycl.}(\alpha\mu\nu).  \notag
\end{align}
Also using (\ref{40}) one can see that%
\begin{align}
&
\sum_{m=1}^{n-3}\sum_{k=1}^{n-m-2}\sum_{l=1}^{n-m-k-1}\partial_{\sigma}%
\eta_{\mu}^{m}\Omega_{\rho\sigma}^{0}\partial_{\beta}\eta_{\nu}^{l}%
\partial_{\rho}\Omega_{\beta\gamma}^{n-m-k-l}\partial_{\gamma}\eta_{\alpha
}^{k}+\mathrm{cycl.}(\alpha\mu\nu)=  \label{b4} \\
&
\sum_{m=1}^{n-4}\sum_{k=1}^{n-m-3}\sum_{l=1}^{n-m-k-2}\sum_{p=1}^{n-m-k-l-1}%
\partial_{\sigma}\eta_{\mu}^{m}\Omega_{\rho\sigma}^{p}\partial_{\rho}\left(
\partial_{\beta}\eta_{\nu}^{l}\Omega_{\beta\gamma
}^{n-m-k-l-p}\partial_{\gamma}\eta_{\alpha}^{k}\right) +\mathrm{cycl.}%
(\alpha\mu\nu).  \notag
\end{align}
Substituting the right-hand side of the eq. (\ref{b4}) in the right hand
side of the eq. (\ref{b3}) one finds that (\ref{i1}) holds true. The proof
of (\ref{i2}) is straightforward.
\end{proof}

The sum $L+M+N$ using the eq. (\ref{49}) for $n=2$ can be written as
\begin{align*}
& L+M+N=\sum_{m=1}^{n-1}\Omega_{\mu\sigma}^{m}\partial_{\sigma}\Omega
_{\nu\alpha}^{n-m}+\sum_{m=1}^{n-3}\partial_{\sigma}\eta_{\mu}^{m}\Omega
_{\rho\sigma}^{0}\partial_{\rho}\left\{ \Omega_{\nu\beta}^{0}\partial_{\beta
}\eta_{\alpha}^{n-m}+\Omega_{\beta\alpha}^{0}\partial_{\beta}\eta_{%
\nu}^{n-m}+\Omega_{\nu\alpha}^{n-m}\right. \\
& +\sum_{k=1}^{n-m-1}\left(
\Omega_{\mu\beta}^{n-m-k}\partial_{\beta}\eta_{\nu}^{k}+\Omega_{\beta%
\nu}^{n-m-k}\partial_{\beta}\eta_{\mu}^{k}+\partial_{\beta}\eta_{%
\nu}^{n-m-k}\Omega_{\beta\gamma}^{0}\partial
_{\gamma}\eta_{\alpha}^{k}\right) \\
& +\sum_{k=1}^{n-m-2}\sum_{l=1}^{n-m-k-1}\left. \partial_{\beta}\eta_{\nu
}^{l}\Omega_{\beta\gamma}^{n-m-k-l}\partial_{\gamma}\eta_{\alpha}^{k}\right\}
\\
&
+\sum_{m=1}^{n-4}\sum_{k=1}^{n-m-3}\partial_{\sigma}\eta_{\mu}^{m}\Omega_{%
\rho\sigma}^{k}\partial_{\rho}\left\{
\Omega_{\nu\beta}^{0}\partial_{\beta}\eta_{\alpha}^{n-m-k}+\Omega_{\beta%
\alpha}^{0}\partial_{\beta
}\eta_{\nu}^{n-m-k}+\Omega_{\nu\alpha}^{n-m-k}\right. \\
& +\sum_{l=1}^{n-m-k-1}\left( \Omega_{\nu\beta}^{n-m-k-l}\partial_{\beta
}\eta_{\alpha}^{l}+\Omega_{\beta\alpha}^{n-m-k-l}\partial_{\beta}\eta_{\nu
}^{l}+\partial_{\beta}\eta_{\nu}^{n-m-k-l}\Omega_{\beta\gamma}^{0}\partial_{%
\gamma}\eta_{\alpha}^{l}\right) \\
& +\sum_{l=1}^{n-m-k-2}\sum_{p=1}^{n-m-k-l-1}\left.
\partial_{\beta}\eta_{\nu}^{l}\Omega_{\beta\gamma}^{n-m-k-l-p}\partial_{%
\gamma}\eta_{\alpha }^{k}\right\} +\mathrm{cycl.}(\alpha\mu\nu).
\end{align*}
Again using eq. (\ref{49}) we end up with%
\begin{equation*}
L+M+N=\sum_{m=1}^{n-1}\Omega_{\mu\sigma}^{m}\partial_{\sigma}\Omega_{\nu
\alpha}^{n-m}.
\end{equation*}
Since $M=N=0$ due to Lemma \ref{L4}, the integrability condition (\ref{50})
has the form
\begin{equation*}
\Omega_{\mu\sigma}^{0}\partial_{\sigma}\Omega_{\nu\alpha}^{n}+%
\sum_{m=1}^{n-1}\Omega_{\mu\sigma}^{m}\partial_{\sigma}\Omega_{\nu%
\alpha}^{n-m}+\mbox{cycl.}(\mu\nu\alpha)=0,
\end{equation*}
which is exactly Jacobi Identity (\ref{6}) in the $n$-th order.


\begin{thebibliography}{99}
\bibitem{NCreviews} M. Douglas, N. Nekrasov, Rev.Mod.Phys.\textbf{73} (2001)
977-1029, R. Szabo, Phys.Rept.\textbf{378} (2003) 207-299.

\bibitem{AGSV} T.C. Adorno, D.M. Gitman, A.E. Shabad, D.V. Vassilevich,
Phys.Rev.\textbf{D84} (2011) 085031.

\bibitem{Doplicher} S. Doplicher, K. Fredenhagen and J. Roberts,
Commum.Math.Phys. \textbf{172} (1995) 187.

\bibitem{BW} M. Buric, M. Wohlgenannt, JHEP \textbf{1003} (2010) 053.

\bibitem{GW} H. Grosse, R. Wulkenhaar, JHEP \textbf{12} (2003) 019

\bibitem{GK} M. Gomes, V.G. Kupriyanov, Phys.Rev.\textbf{D79} (2009) 125011.

\bibitem{Fring1} A. Fring, L. Gouba, F.G. Scholtz,  J.Phys. A43 (2010) 345401.

\bibitem{Fling2} B. Bagchi, A. Fring, Phys. Lett. A373 (2009) 4307.

\bibitem{Meljanac} E. Harikumar, T. Juric, S. Meljanac, Phys.Rev.\textbf{D84} (2011) 085020; Phys.Rev.\textbf{D86} (2012) 045002.

\bibitem{Presnajder} V. Galikova, P. Presnajder, J. Phys: Conf. Ser. \textbf{343} (2012) 012096.

\bibitem{Karasev} M.V. Karasev, \textit{Analogues of objectsof the theory of
Lie groupsfor nonlinear Poisson brackets. }(Russian) Izv.Akad.Nauk SSSR Ser.
Mat. \textbf{50 }(3) 638 (1986).

\bibitem{Weinstein} A. Coste, P. Dazord, A. Weinstein, \textit{Groupoides
symplectiques.} (French) Publ. D\'{e}p. Math. Nouvelle S\'{e}r. A, 87-2,
Lyon: Univ. Claude-Berbard, 1987, pp. 1-62.

\bibitem{Arnold} V.I. Arnold, \textit{Mathematical methods of Classical
Mechanics}, Springer, New York (1978).

\bibitem{CDF} A.S. Cattaneo, B. Dherin, G. Felder, Commun.Math.Phys. \textbf{%
253} (2005) 645.

\bibitem{Kontsevich} M.~Kontsevich, \ Lett.\ Math.\ Phys.\ \textbf{66} (2003) 157.

\bibitem{DS} G. Dito, D. Sternheimer, \textit{Deformation quantization:
Genesis, developments and metamorphoses. }9--54, IRMA Lect. Math. Theor.
Phys., 1, de Gruyter, Berlin, 2002. arXiv:math/0201168.

\bibitem{KV} V.G. Kupriyanov, D.V. Vassilevich, Eur.Phys.J.C. \textbf{58}
(2008) 627-637.

\bibitem{GKS} M. Gomes, V.G. Kupriyanov, A.J. da Silva, Phys.Rev.\textbf{D81}
(2010) 085024.

\bibitem{GitKup} D.M.~Gitman and V.G.~Kupriyanov, Eur.Phys.J.\ C \textbf{54}
(2008) 325.

\bibitem{FS} G. Felder, B. Shoikhet, Lett. Math. Phys. \textbf{53} (2000)
75-86 .

\bibitem{kup15} V.G. Kupriyanov, {\it Hydrogen atom on curved noncommutative space}, arXiv:1209.6105.

\bibitem{Chaichian} M.~Chaichian, M.M.~Sheikh-Jabbari and A.~Tureanu,
Phys.Rev.Lett.\ \textbf{86} (2001) 2716.

\bibitem{Fuzzy} J. Madore, Class.Quant.Grav. 9 (1992) 69; G. Alexanian, A. Pinzul, A. Stern, Nucl.Phys. \textbf{B600} (2001) 531; H.Grosse, J.Madore, H.Steinacker, Int.J.Mod.Phys. A {\bf 17} (2002) 2095.

\bibitem{Hammou} A.B.Hammou, M.Lagraa and M.M.Sheikh-Jabbari, Phys. Rev.D
{\bf 66} (2002) 025025; E.Moreno, Phys. Rev. D {\bf 72} (2005) 045001.

\end{thebibliography}
\end{document}